\newcommand{\pnt}[1]{{\mbox{$\vec{#1}$}}}
\newcommand{\ppnt}[2]{{\mbox{$\vec{#1}_{#2}$}}}
\newcommand{\V}[1]{\mbox{$\mathit{Vars}(#1)$}}
\newcommand{\s}[1]{\mbox{$\{#1\}$}}
\newcommand{\nGz}[2]{$G_{non-\{z\}}$}
\newcommand{\prr}[1]{\mi{Prev}(\boldsymbol{q})}
\newcommand{\mi}[1]{\mathit{#1}}
\newcommand{\ti}[1]{\textit{#1}}
\newcommand{\tb}[1]{\textbf{#1}}
\newcommand{\ttt}{\>\>\>}
\newcommand{\Tt}{\>\>}
\newcommand{\sub}[2]{\mbox{$\vec{#1}_\mi{#2}$}}
\newcommand{\prob}[2]{\mbox{$\exists{#1} [#2]$}}
\newcommand{\Comment}[1]{}
\newcommand{\sas}{\mbox{$\mi{SemStr}$}\xspace}
\newcommand{\cube}[1]{\mbox{$\mi{Cube}(\pnt{#1})$}}
\newcommand{\Cube}[2]{\mbox{$\mi{Cube}(\ppnt{#1}{#2})$}}
\newcommand{\nbhd}[2]{\mbox{$\mi{Nbhd}(\pnt{#1},#2)$}}
\newcommand{\nnbhd}[3]{\mbox{$\mi{Nbhd}(\ppnt{#1}{#2},#3)$}}
\newcommand{\Nbhd}[3]{\mbox{$\mi{Nbhd}(\sub{#1}{init},\pnt{#2},#3)$}}
\newcommand{\NNbhd}[4]{\mbox{$\mi{Nbhd}(\sub{#1}{init},\ppnt{#2}{#3},#4)$}}
\newcommand{\Nbh}[3]{\mbox{$\mi{Nbhd}(\pnt{#1},\pnt{#2},#3)$}}
\newcommand{\ac}[1]{\mbox{$\Phi(#1)$}}
\newcommand{\Fi}{\mbox{$\Phi$}\xspace}
\begin{document}

\title{Generation Of Complete Test Sets}


\author{Eugene Goldberg}
\institute{\email{eu.goldberg@gmail.com}}

\maketitle

\begin{abstract}
  We use testing to check if a combinational circuit $N$ always
  evaluates to 0 (written as $N \equiv 0$).  The usual point of view
  is that to prove $N \equiv 0$ one has to check the value of $N$ for
  all $2^{|X|}$ input assignments where $X$ is the set of input
  variables of $N$. We use the notion of a Stable Set of Assignments
  (SSA) to show that one can build a \ti{complete} test set (i.e. a
  test set proving $N \equiv 0$) that consists of less than $2^{|X|}$
  tests. Given an unsatisfiable CNF formula $H(W)$, an SSA of $H$ is a
  set of assignments to $W$ proving unsatisfiability of $H$. A trivial
  SSA is the set of all $2^{|W|}$ assignments to $W$. Importantly,
  real-life formulas can have SSAs that are much smaller than
  $2^{|W|}$.  Generating a complete test set for $N$ using only the
  machinery of SSAs is inefficient. We describe a much faster
  algorithm that combines computation of SSAs with resolution
  derivation and produces a complete test set for a ``projection'' of
  $N$ on a subset of variables of $N$.  We give experimental results
  and describe potential applications of this algorithm.
 
\end{abstract}


\section{Introduction}
  Testing is an important part of verification flows. For that reason,
  any progress in understanding testing and improving its quality is
  of great importance. In this paper, we consider the following
  problem. Given a single-output combinational circuit $N$, find a set
  of input assignments (tests) proving that $N$ evaluates to 0 for
  every test (written as $N \equiv 0$) or find a
  counterexample\footnote{Circuit $N$ usually describes some property of a multi-circuit $M$,
the latter being the real object of verification. For instance, $N$
may specify a requirement that $M$ never outputs some combinations of
values.
}. We will call a set of
  input assignments proving $N \equiv 0$ a \ti{complete test set}
  (\ti{CTS})\footnote{Term CTS is sometimes used to say that a test set is complete in terms
of a \ti{coverage metric} i.e. that every event considered by this
metric is tested. Our application of term CTS is obviously quite
different.
}.  We will call a CTS
  \ti{trivial} if it consists of all possible tests.  Typically, one
  assumes that proving $N \equiv 0$ involves derivation of a trivial
  CTS, which is infeasible in practice.  Thus, testing is used only
  for finding an input assignment refuting $N \equiv 0$. In this
  paper, we present an approach for building a non-trivial CTS that
  consists only of a subset of all possible tests.

  Let $N(X,Y,z)$ be a single-output combinational circuit where $X$
  and $Y$ are sets of variables specifying input and internal
  variables of $N$ respectively. Variable $z$ specifies the output of
  $N$. Let $F_N(X,Y,z)$ be a formula defining the functionality of $N$
  (see Section~\ref{sec:cts}). We will denote the set of variables of
  circuit $N$ (respectively formula $H$) as \V{N} (respectively
  \V{H}). Every assignment\footnote{By an assignment to a set of variables $V$, we mean a \ti{full}
assignment where every variable of $V$ is assigned a value.
} to \V{F_N}
  satisfying $F_N$ corresponds to a consistent
  assignment\footnote{An assignment to a gate $G$ of $N$ is called consistent if the value
assigned to the output variable of $G$ is implied by values assigned
to its input variables. An assignment to variables of $N$ is called
consistent if it is consistent for every gate of $N$.
} to \V{N} and vice versa. Then
  the problem of proving $N \equiv 0$ reduces to showing that formula
  $F_N \wedge z$ is unsatisfiable.  From now on, we assume that all
  formulas mentioned in this paper are \ti{propositional}. Besides, we
  will assume that every formula is represented in CNF i.e. as a
  conjunction of disjunctions of literals.  We will also refer to a
  disjunction of literals as a \ti{clause}.

  Our approach is based on the notion of a Stable Set of Assignments
  (SSA) introduced in~\cite{ssp}.  Given formula $H(W)$, an SSA of $H$
  is a set $P$ of assignments to variables of $W$ that have two
  properties.  First, every assignment of $P$ falsifies $H$. Second,
  $P$ is a transitive closure of some neighborhood relation between
  assignments (see Section~\ref{sec:ssa}). The fact that $H$ has an
  SSA means that the former is unsatisfiable. Otherwise, an assignment
  satisfying $H$ is generated when building its SSA. If $H$ is
  unsatisfiable, the set of all $2^{|W|}$ assignments is always an SSA
  of $H$ . We will refer to it as \ti{trivial}. Importantly, a
  real-life formula $H$ can have a lot of SSAs whose size is much less
  than $2^{|W|}$. We will refer to them as \ti{non-trivial}.  As we
  show in Section~\ref{sec:ssa}, the fact that $P$ is an SSA of $H$ is
  a \ti{structural} property of the latter. That is this property
  cannot be expressed in terms of the truth table of $H$ (as opposed
  to a \ti{semantic} property of $H$). For that reason, if $P$ is an
  SSA for $H$, it may not be an SSA for some other formula $H'$ that
  is logically equivalent to $H$.

  We show that a CTS for $N$ can be easily extracted from an SSA of
  formula $F_N \wedge z$. This makes a non-trivial CTS a structural
  property of circuit $N$ that cannot be expressed in terms of its
  truth table.  Unfortunately, building an SSA even for a formula of
  small size is inefficient.  To address this problem, we present a
  procedure that constructs a simpler formula $H(V)$ where $V
  \subseteq \V{F_N \wedge z}$ for which an SSA is generated.  Formula
  $H$ is implied by $F_N \wedge z$. Thus, the unsatisfiability of $H$
  proved by construction of its SSA implies that $F_N \wedge z$ is
  unsatisfiable too and $N \equiv 0$.  A test set extracted from an
  SSA of $H$ can be viewed as a CTS for a ``projection'' of $N$ on
  variables of $V$.

  We will refer to the procedure for building formula $H$ above as
  \sas (``\ti{Sem}antics and \ti{Str}ucture''). The name is due to the
  fact that \sas combines semantic and structural derivations.  \sas
  can be applied to an arbitrary CNF formula $G(V,W)$. If $G$ is
  unsatisfiable, \sas returns a formula $H(V)$ implied by $G$ and its
  SSA. Otherwise, it produces an assignment to $V \cup W$ satisfying
  $G$. The semantic part of \sas is to derive $H$. Its structural part
  consists of proving that $H$ is unsatisfiable by constructing an
  SSA. Formula $H$ produced when $G$ is unsatisfiable is logically
  equivalent to \prob{W}{G}. Thus, \sas can be viewed as a quantifier
  elimination algorithm for unsatisfiable formulas. On the other hand,
  \sas can be applied to check satisfiability of a CNF formula, which
  makes it a SAT-algorithm.

  The notion of non-trivial CTSs helps better understand testing.  The
  latter is usually considered as an incomplete version of a semantic
  derivation. This point of view explains why testing is efficient
  (because it is incomplete) but does not explain why it is effective
  (only a minuscule part of the truth table is sampled). Since a
  non-trivial CTS for $N$ is its structural property, it is more
  appropriate to consider testing as a version of a \ti{structural}
  derivation (possibly incomplete). This point of view explains not
  only efficiency of testing but provides a better explanation for its
  effectiveness: by using circuit-specific tests one can cover a
  significant part of a non-trivial CTS.

  The contribution of this paper is threefold. First, we use the
  machinery of SSAs to introduce the notion of non-trivial CTSs
  (Section~\ref{sec:cts}). Second, we present \sas, a SAT-algorithm
  that combines structural and semantic derivations
  (Section~\ref{sec:algor}).  We show that this algorithm can be used
  for computing a CTS for a projection of a circuit.  We also discuss
  some applications of \sas (Sections~\ref{sec:appl_test}
  and~\ref{sec:appl_sat}).  Third, we give experimental results
  showing the effectiveness of tests produced by \sas
  (Section~\ref{sec:exper}). In particular, we describe a procedure
  for ``piecewise'' construction of test sets that can be potentially
  applied to very large circuits.

\section{Stable Set Of Assignments}
\label{sec:ssa}
\subsection{Some definitions}
Let \pnt{p} be an assignment to a set of variables $V$. Let \pnt{p}
falsify a clause $C$.  Denote by {\boldmath \nbhd{p}{C}} the set of
assignments to $V$ satisfying $C$ that are at Hamming distance 1 from
\pnt{p}. (Here \ti{Nbhd} stands for ``Neighborhood''). Thus, the
number of assignments in \nbhd{p}{C} is equal to that of literals in
$C$. Let \pnt{q} be another assignment to $V$ (that may be equal to
\pnt{p}). Denote by {\boldmath \Nbh{q}{p}{C}} the subset of
\nbhd{p}{C} consisting only of assignments that are farther away from
\pnt{q} than \pnt{p} (in terms of the Hamming distance).

\begin{example}
  Let $V=\s{v_1,v_2,v_3,v_4}$ and \pnt{p}=0110. We assume that the
  values are listed in \pnt{p} in the order the corresponding
  variables are numbered i.e. \mbox{$v_1=0$}, $v_2=1,v_3=1,v_4=0$. Let $C= v_1
  \vee \overline{v_3}$. (Note that \pnt{p} falsifies $C$.) Then
  \nbhd{p}{C}=\s{\ppnt{p}{1},\ppnt{p}{2}} where \ppnt{p}{1} = 1110 and
  \ppnt{p}{2}=0100. Let \pnt{q} = 0000. Note that \ppnt{p}{2} is
  actually closer to \pnt{q} than \pnt{p}. So
  \Nbh{q}{p}{C}=\s{\ppnt{p}{1}}.
\end{example}
\begin{definition}
  \label{def:ac_fun}
  Let $H$ be a formula\footnote{In this paper, we use the set of clauses \s{C_1,\dots,C_k} as an
alternative representation of a CNF formula $C_1 \wedge \dots \wedge
C_k$.
\vspace{-10pt}

} specified by a set
  of clauses \s{C_1,\dots,C_k}.  Let $P$ =
  \s{\ppnt{p}{1},\dots,\ppnt{p}{m}} be a set of assignments to \V{H}
  such that every $\ppnt{p}{i} \in P$ falsifies $H$.  Let \Fi denote a
  mapping $P \rightarrow H$ where \ac{\ppnt{p}{i}} is a clause $C$ of
  $H$ falsified by \ppnt{p}{i}. We will call \Fi an \tb{AC-mapping}
  where ``AC'' stands for ``Assignment-to-Clause''. We will denote the
  range of \Fi as \ac{P}. (So, a clause $C$ of $H$ is in \ac{P} iff
  there is an assignment $\ppnt{p}{i} \in P$ such that $C =
  \Fi(\ppnt{p}{i})$.)
\end{definition}
\begin{definition}
 \label{def:ssa}
Let $H$ be a formula specified by a set of clauses
\s{C_1,\dots,C_k}. Let $P$ = \s{\ppnt{p}{1},\dots,\ppnt{p}{m}} be a
set of assignments to \V{H}. $P$ is called a \tb{Stable Set of
  Assignments}\footnote{In~\cite{ssp}, the notion of ``uncentered'' SSAs was introduced. The
\label{ftn:ssa}
definition of an uncentered SSA is similar to
Definition~\ref{def:ssa}. The only difference is that one requires that for
every $p_i \in P$,  $\nnbhd{p}{i}{C} \subseteq P$ holds instead of
$\NNbhd{p}{p}{i}{C} \subseteq P$.
} (SSA) of $H$ with
\tb{center} $\sub{p}{init} \in P$ if there is an AC-mapping \Fi such
that for every $\ppnt{p}{i}\in P$, $\NNbhd{p}{p}{i}{C} \subseteq P$
holds where $C = \ac{\ppnt{p}{i}}$.
\end{definition}

Note that if $P$ is an SSA of $H$ with respect to AC-mapping \Fi, then
$P$ is also an SSA of \ac{P}. 

\begin{example}
 \label{exmp:ssa}
  Let $H$ consist of four clauses: $C_1 = v_1 \vee v_2 \vee v_3$, $C_2
  = \overline{v}_1$, $C_3 = \overline{v}_2$, $C_4 = \overline{v}_3$.
  Let $P =\s{\ppnt{p}{1},\ppnt{p}{2},\ppnt{p}{3},\ppnt{p}{4}}$ where
  $\ppnt{p}{1} = 000$, $\ppnt{p}{2} = 100$, $\ppnt{p}{3} = 010$,
  $\ppnt{p}{4}=001$.  Let \Fi be an AC-mapping specified as
  $\ac{\ppnt{p}{i}} = C_i, i = 1,\dots,4$.  Since $\ppnt{p}{i}$
  falsifies $C_i$, $i=1,\dots,4$,~~\Fi is a correct AC-mapping. Set
  $P$ is an SSA of $H$ with respect to \Fi and center \sub{p}{init}=\ppnt{p}{1}. Indeed,
  \NNbhd{p}{p}{1}{C_1}=\s{\ppnt{p}{2},\ppnt{p}{3},\ppnt{p}{4}} where $C_1
  = \ac{\ppnt{p}{1}}$ and \NNbhd{p}{p}{i}{C_i} = $\emptyset$, where
  $C_i = \ac{\ppnt{p}{i}}$, $i=2,3,4$. Thus,
  $\mi{Nbhd}(\sub{p}{init},\ppnt{p}{i},\ac{\ppnt{p}{i}}) \subseteq P$,
  $i=1,\dots,4$.
\end{example}
\subsection{SSAs and  satisfiability of a formula}
\label{ssec:ssa_sat}
\begin{proposition}
  Formula $H$ is unsatisfiable iff it has an SSA.
\end{proposition}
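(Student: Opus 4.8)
The plan is to prove the two implications separately; the ``only if'' part is immediate and essentially all the work is in the ``if'' part. For the ``only if'' direction, assume $H$ is unsatisfiable. Then every assignment to \V{H} falsifies $H$, so let $P$ be the set of all such assignments and let \Fi be any AC-mapping (for each $\pnt{p}\in P$ choose an arbitrary clause of $H$ falsified by $\pnt{p}$). Fix any $\sub{p}{init}\in P$ as the center. For every $\pnt{p}\in P$ and $C=\ac{\pnt{p}}$, the set \Nbhd{p}{p}{C} consists of assignments to \V{H} and is therefore contained in $P$; hence $P$ is an SSA of $H$, namely the trivial one.

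For the ``if'' direction, suppose $P$ is an SSA of $H$ with center \sub{p}{init} and AC-mapping \Fi, and suppose toward a contradiction that some \pnt{s} satisfies $H$. The key device is to restrict attention to the Hamming interval $I$ between \sub{p}{init} and \pnt{s}, \ie the set of assignments \pnt{q} with $d(\sub{p}{init},\pnt{q})+d(\pnt{q},\pnt{s})=d(\sub{p}{init},\pnt{s})$; equivalently, \pnt{q} agrees with \sub{p}{init} on every variable on which \sub{p}{init} and \pnt{s} agree. Since $\sub{p}{init}\in P\cap I$, the set $P\cap I$ is nonempty, so we may pick $\pnt{p}\in P\cap I$ of maximal Hamming distance from \sub{p}{init}, equivalently of minimal Hamming distance from \pnt{s} (on $I$ the two distances sum to the constant $d(\sub{p}{init},\pnt{s})$). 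Every member of $P$ falsifies $H$, so $\pnt{s}\notin P$ and hence $\pnt{p}\neq\pnt{s}$.

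The core step is to produce a neighbor of \pnt{p} lying in $P\cap I$ and strictly farther from \sub{p}{init}, contradicting the choice of \pnt{p}. Put $C=\ac{\pnt{p}}$, a clause of $H$ that \pnt{p} falsifies; since \pnt{s} satisfies $H$ it satisfies $C$, so pick a literal of $C$ satisfied by \pnt{s} but not by \pnt{p}, and let $v$ be its variable, so $\pnt{p}(v)\neq\pnt{s}(v)$. Because $\pnt{p}\in I$ and \pnt{p} disagrees with \pnt{s} at $v$, the assignments \sub{p}{init} and \pnt{s} must also disagree at $v$ --- otherwise the interval condition would force $\pnt{p}(v)=\sub{p}{init}(v)=\pnt{s}(v)$ --- so $\sub{p}{init}(v)=\pnt{p}(v)$. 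Let $\pnt{p}'$ be \pnt{p} with the value of $v$ toggled. Then $\pnt{p}'$ satisfies that literal, hence $C$; it is at Hamming distance $1$ from \pnt{p}; it is strictly farther from \sub{p}{init} than \pnt{p} because we flipped a variable on which those two agreed; and it still lies in $I$, since the only changed variable, $v$, is one on which \sub{p}{init} and \pnt{s} disagree. Thus $\pnt{p}'\in\Nbhd{p}{p}{C}\subseteq P$ and $\pnt{p}'\in P\cap I$ with $d(\sub{p}{init},\pnt{p}')=d(\sub{p}{init},\pnt{p})+1$, the desired contradiction; hence $H$ is unsatisfiable.

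I expect the main obstacle to be exactly this last step, and in particular the need for the interval restriction. The naive attempt --- take \pnt{p} to be a member of $P$ at minimal distance from \pnt{s} without restricting to $I$ --- breaks down, because the flip dictated by $C$ that moves \pnt{p} one step toward \pnt{s} may move it \emph{toward} \sub{p}{init}, in which case the resulting neighbor need not belong to \Nbhd{p}{p}{C} and so need not lie in $P$. Working inside $I$ is what guarantees that every ``toward-\pnt{s}'' flip forced by $C$ is simultaneously an ``away-from-\sub{p}{init}'' flip, which is precisely what lets the SSA closure property $\Nbhd{p}{p}{C}\subseteq P$ be invoked.
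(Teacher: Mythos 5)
Your proof is correct, and it is worth comparing with the paper's own argument because the two differ at the decisive step. For the ``if'' direction the paper also argues by contradiction with a satisfying assignment \pnt{s}, but it picks $\pnt{p}\in P$ globally closest to \pnt{s}, flips a variable $v$ of $C=\Fi(\pnt{p})$ on which \pnt{p} and \pnt{s} disagree, and then simply asserts that the flipped assignment $\pnt{p}^*$ belongs to $\mi{Nbhd}(\sub{p}{init},\pnt{p},C)$; that membership requires $\pnt{p}^*$ to be \emph{farther from the center} than \pnt{p}, which does not follow from that choice of \pnt{p} and $v$ alone --- precisely the failure mode you describe at the end of your proposal. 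Your device of restricting to the Hamming interval $I$ between \sub{p}{init} and \pnt{s} and taking $\pnt{p}\in P\cap I$ farthest from the center makes the ``away from the center'' condition automatic (any variable of $C$ on which \pnt{p} and \pnt{s} disagree must be one on which \sub{p}{init} and \pnt{s} disagree and \pnt{p} agrees with \sub{p}{init}), keeps $\pnt{p}'$ inside $I$, and turns the extremality of \pnt{p} into the desired contradiction; in effect you supply the justification that the paper's proof leaves implicit, so your version is the more rigorous one. For the ``only if'' direction the paper appeals to the \ti{BuildSSA} procedure (whose termination and correctness it does not prove), whereas you exhibit the trivial SSA consisting of all assignments with an arbitrary AC-mapping and center; this is more elementary and self-contained, and matches the paper's own earlier remark that the set of all assignments is always an SSA of an unsatisfiable formula. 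In short: same contradiction skeleton, but your extremal choice inside the interval is genuinely different from the paper's ``closest to \pnt{s}'' choice, and it is what actually validates the key neighborhood-membership step.
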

The proof is given in Section~\ref{app:proofs} of the appendix. A
similar proposition was proved in~\cite{ssp} for ``uncentered'' SSAs
(see Footnote~\ref{ftn:ssa}).
\begin{corollary}
  Let $P$ be an SSA of $H$ with respect to PC-mapping \Fi.
  Then the set of clauses \ac{P} is unsatisfiable. Thus,
  every clause of $H \setminus \ac{P}$ is redundant.
\end{corollary}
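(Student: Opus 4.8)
The plan is to reduce the corollary to the Proposition of Section~\ref{ssec:ssa_sat} by way of the remark stated just before the corollary, namely that $P$ is also an SSA of \ac{P}. So the first step is to spell out why that remark holds. Regard \Fi as a mapping $P \rightarrow \ac{P}$; by Definition~\ref{def:ac_fun} its range is exactly \ac{P}, and since each $\ppnt{p}{i} \in P$ falsifies the clause $C := \ac{\ppnt{p}{i}}$, the map \Fi is still a legitimate AC-mapping when the ambient formula is \ac{P} rather than $H$. Moreover, the stability requirement of Definition~\ref{def:ssa}, namely $\NNbhd{p}{p}{i}{C} \subseteq P$ with $C = \ac{\ppnt{p}{i}}$, mentions only $P$, the center \sub{p}{init}, and the clause $C$; none of these is affected when $H$ is replaced by \ac{P}, so the requirement is inherited verbatim for every $\ppnt{p}{i} \in P$. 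Hence $P$ is an SSA of \ac{P} with the same center.

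The second step is immediate: applying the Proposition of Section~\ref{ssec:ssa_sat} (a formula is unsatisfiable iff it has an SSA) to the formula \ac{P}, the fact that $P$ is an SSA of \ac{P} shows that \ac{P} is unsatisfiable. This establishes the first assertion of the corollary.

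For the redundancy claim, I would take an arbitrary clause $C \in H \setminus \ac{P}$. Since \ac{P} $\subseteq H \setminus \s{C}$, the formula $H \setminus \s{C}$ still contains the unsatisfiable sub-formula \ac{P} and is therefore itself unsatisfiable; $H$ is unsatisfiable as well (it contains \ac{P} too). Two unsatisfiable formulas are logically equivalent — both equal \false — so $H \setminus \s{C} \equiv H$, i.e. $C$ is redundant in $H$. Since $C$ was arbitrary, every clause of $H \setminus \ac{P}$ is redundant.

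The whole argument is essentially bookkeeping, and I do not expect a genuine obstacle. The only place that warrants a moment's care is the verification in the first step that the neighborhood-containment condition of Definition~\ref{def:ssa} transfers unchanged from $H$ to its sub-formula \ac{P} (in particular that the relevant neighborhoods do not ``leak'' out of $P$), which I regard as the mild crux of the proof.
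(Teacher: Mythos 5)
Your proposal is correct and follows exactly the route the paper intends: the remark preceding the corollary (that $P$ is an SSA of \ac{P}, which you justify in the natural way) combined with the ``if'' direction of the Proposition gives unsatisfiability of \ac{P}, and redundancy of the clauses of $H \setminus \ac{P}$ then follows since $H$ and $H$ minus any such clause both contain the unsatisfiable subset \ac{P}. The paper leaves these steps implicit, so your write-up is just a spelled-out version of the same argument.
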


The set of all assignments to \V{H} forms the \ti{trivial} uncentered
SSA of $H$. Example~\ref{exmp:ssa} shows a \ti{non-trivial} SSA. The
fact that formula $H$ has a non-trivial SSA $P$ is its \ti{structural}
property. That is one cannot express the fact that $P$ is an SSA of
$H$ using only the truth table of $H$. For that reason, $P$ may not be
an SSA of a formula $H'$ logically equivalent to $H$.

%
%
%
\setlength{\intextsep}{5pt}
\begin{wrapfigure}{L}{1.9in}
\small
\begin{tabbing}
aaa\=b\=cc\= dd\= \kill
$\mi{BuildPath}(H,\Fi,\sub{p}{init},\vec{s})$\{\\
\tb{\scriptsize{1}}\>  $\mi{Path} := \mi{nil}$ \\
\tb{\scriptsize{2}}\>  $\vec{p}_1 := \sub{p}{init}$\\
\tb{\scriptsize{3}}\>  $i := 1$ \\
\tb{\scriptsize{4}}\>  while ($\vec{p}_i \neq \vec{s}$) \{\\
\tb{\scriptsize{5}}\Tt   $\mi{Path} := \mi{AddAssgn}(\mi{Path},\vec{p}_i)$ \\
\tb{\scriptsize{6}}\Tt   $C := \ac{\ppnt{p}{i}}$  \\
\tb{\scriptsize{7}}*\Tt   $v := \mi{FindVar}(C,\vec{p}_i,\vec{s})$ \\
\tb{\scriptsize{8}}\Tt   $\vec{p}_{i+1} := \mi{FlipVar}(\vec{p}_i,v)$ \\
\tb{\scriptsize{9}}\Tt   $i := i+1$  \} \\
\tb{\scriptsize{10}}\>  return($\mi{Path}$) \}\\
\end{tabbing} 
\vspace{-20pt}
\caption{\ti{BuildPath} procedure}
\label{fig:bld_path}
\end{wrapfigure}

The relation between SSAs and satisfiability can be explained as
follows. Suppose that formula $H$ is satisfiable. Let \sub{p}{init} be
an arbitrary assignment to \V{H} and \pnt{s} be a satisfying
assignment that is the closest to \sub{p}{init} in terms of the
Hamming distance.  Let $P$ be the set of all assignments to \V{H} that
falsify $H$ and \Fi be an AC-mapping from $P$ to $H$.  Then \pnt{s}
can be reached from \sub{p}{init} by procedure \ti{BuildPath} shown in
Figure~\ref{fig:bld_path}. (This procedure is non-deterministic: an
oracle is used in line 7 to pick a variable to flip.) It generates a
sequence of assignments $\ppnt{p}{1},\dots,\ppnt{p}{i}$ where
\ppnt{p}{1} = \sub{p}{init} and \ppnt{p}{i}=\pnt{s}. First,
\ti{BuildPath} checks if current assignment \ppnt{p}{i} equals
\pnt{s}. If so, then \pnt{s} has been reached.  Otherwise,
\ti{BuildPath} uses clause $C=\ac{\ppnt{p}{i}}$ to generate next
assignment. Since \pnt{s} satisfies $C$, there is a variable $v \in
\V{C}$ that is assigned differently in \ppnt{p}{i} and
\pnt{s}. \ti{BuildPath} generates a new assignment \ppnt{p}{i+1}
obtained from \ppnt{p}{i} by flipping the value of $v$.

\ti{BuildPath} converges to \pnt{s} in $k$ steps where $k$ is the
Hamming distance between \pnt{p} and \pnt{s}.  Importantly,
\ti{BuildPath} reaches \pnt{s} for \ti{any} AC-mapping. Let $P$ be an
SSA of $H$ with respect to center \sub{p}{init} and AC-mapping
\Fi. Then if \ti{BuildPath} starts with \sub{p}{init} and uses \Fi as
AC-mapping, it can reach only assignments of $P$. Since every
assignment of $P$ falsifies $H$, no satisfying assignment can be
reached.

%
%
\begin{wrapfigure}{l}{2.0in}
\small
\vspace{-10pt}
\begin{tabbing}
aaa\=bb\=cc\= dd\= \kill
$\mi{BuildSSA}(H)$\{\\
\tb{\scriptsize{1}}\> $E = \emptyset$;  $\Fi := \emptyset$  \\
\tb{\scriptsize{2}}\> $\sub{p}{init} := \mi{PickInitAssgn}(H)$\\
\tb{\scriptsize{3}}\> $Q := \s{\sub{p}{init}}$   \\
\tb{\scriptsize{4}}\>  while ($Q \neq \emptyset$) \{\\
\tb{\scriptsize{5}}\Tt  $\pnt{p} := \mi{PickAssgn}(\mi{Q})$  \\
\tb{\scriptsize{6}}\Tt  $\mi{Q} := \mi{Q} \setminus \s{\pnt{p}}$ \\
\tb{\scriptsize{7}}\Tt  if $(\mi{SatAssgn}(\pnt{p},H))$ \\
\tb{\scriptsize{8}}\ttt   return($\pnt{p},\mi{nil},\mi{nil},\mi{nil}$) \\
\tb{\scriptsize{9}}\Tt $C := \mi{PickFalsifClause}(H,\pnt{p})$   \\
\tb{\scriptsize{10}}\Tt $New := \Nbhd{p}{p}{C} \setminus E$ \\
\tb{\scriptsize{11}}\Tt $Q := Q \cup New$ \\
\tb{\scriptsize{12}}\Tt $E := E \cup \s{\pnt{p}}$ \\
\tb{\scriptsize{13}}\Tt $\Fi := \Fi \cup \s{(\pnt{p},C)}$\} \\
\tb{\scriptsize{14}}\> return($\mi{nil},E,\sub{p}{init},\Fi$) \}\\
\end{tabbing} 
\vspace{-20pt}
\caption{\ti{BuildSSA} procedure}
\vspace{20pt}
\label{fig:bld_ssa}
\end{wrapfigure}

A procedure for generation of SSAs called \ti{BuildSSA} is shown in
Figure~\ref{fig:bld_ssa}. It accepts formula $H$ and outputs either a
satisfying assignment or an SSA of $H$, a center \sub{p}{init} and
AC-mapping \Fi. \ti{BuildSSA} maintains two sets of assignments
denoted as $E$ and $Q$.  Set $E$ contains the examined assignments
i.e. ones whose neighborhood is already explored.  Set $Q$ specifies
assignments that are queued to be examined. $Q$ is initialized with an
assignment \sub{p}{init} and $E$ is originally empty. \ti{BuildSSA}
updates $E$ and $Q$ in a \ti{while} loop. First, \ti{BuildSSA} picks
an assignment \pnt{p} of $Q$ and checks if it satisfies $H$. If so,
\pnt{p} is returned as a satisfying assignment. Otherwise,
\ti{BuildSSA} removes \pnt{p}~\,from $Q$ and picks a clause $C$ of $H$
falsified by \pnt{p}. The assignments of $\Nbhd{p}{p}{C}$ that are not
in $E$ are added to $Q$. After that, \pnt{p} is added to $E$ as an
examined assignment, pair $(\pnt{p},C)$ is added to \Fi and a new
iteration begins. If $Q$ is empty, $E$ is an SSA with center
\sub{p}{init} and AC-mapping \Fi.

\section{Complete Test Sets}
\label{sec:cts}
Let $N(X,Y,z)$ be a single-output combinational circuit where $X$ and
$Y$ are sets of variables specifying input and internal variables of
$N$. Variable $z$ specifies the output of $N$. Let $N$ consist of
gates $G_1,\dots,G_k$.  Then $N$ can be represented as CNF formula
$F_N = F_{G_1} \wedge \dots \wedge F_{G_k}$ where
$F_{G_i},i=1,\dots,k$ is a CNF formula specifying the consistent
assignments of gate $G_i$. Proving $N \equiv 0$ reduces to showing
that formula $F_N \wedge z$ is unsatisfiable.
\setlength{\intextsep}{4pt}
\begin{wrapfigure}{L}{1.7in}
 \begin{center}
    \includegraphics[width=1.6in,height=2.1in]{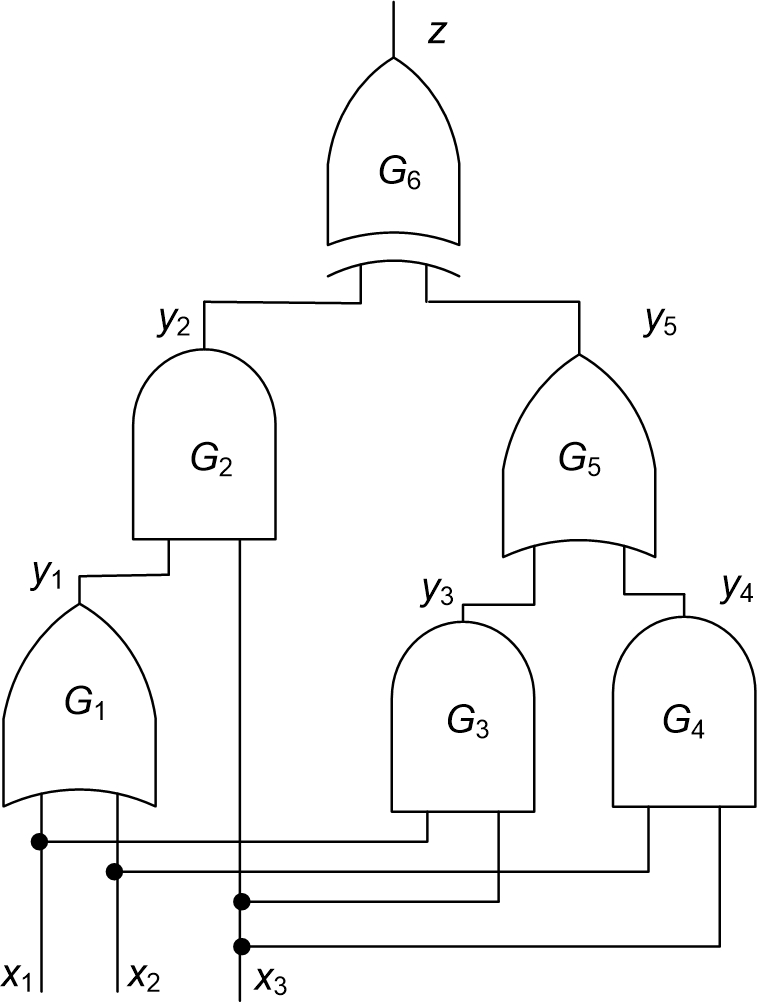}
  \end{center}
\vspace{-15pt}
\caption{Example of circuit $N(X,Y,z)$}
\vspace{-10pt}
\label{fig:miter}
\end{wrapfigure}

\begin{example}
  \label{exmp:circ}
 Circuit $N$ shown in Figure~\ref{fig:miter} represents equivalence
 checking of expressions $(x_1 \vee x_2) \wedge x_3$ and $(x_1 \wedge
 x_3) \vee (x_2 \wedge x_3)$.  The former is specified by gates $G_1$
 and $G_2$ and the latter by $G_3$, $G_4$ and $G_5$. Formula $F_N$ is
 equal to $F_{G_1} \wedge \dots \wedge F_{G_6}$ where, for instance,
 $F_{G_1} = C_1 \wedge C_2 \wedge C_3$, $C_1 = x_1 \vee x_2 \vee
 \overline{y}_1$, $C_2 = \overline{x}_1 \vee y_1$, $C_3 =
 \overline{x}_2 \vee y_1$.  Every satisfying assignment to \V{F_{G_1}}
 corresponds to a consistent assignment to gate $G_1$ and vice
 versa. For instance, $(x_1=0,x_2=0,y_1=0)$ satisfies $F_{G_1}$ and is
 a consistent assignment to $G_1$ since the latter is an OR
 gate. Formula $F_N \wedge z$ is unsatisfiable due to functional
 equivalence of expressions $(x_1 \vee x_2) \wedge x_3$ and $(x_1
 \wedge x_3) \vee (x_2 \wedge x_3)$. Thus, $N \equiv 0$.
\end{example}

Let \pnt{x} be a test i.e. an assignment to $X$.  The set of
assignments to \V{N} sharing the same assignment \pnt{x} to $X$ forms
a cube of $2^{|Y|+1}$ assignments. (Recall that $\V{N} = X \cup Y \cup
\s{z}$.)  Denote this set as \cube{x}. Only one assignment of \cube{x}
specifies the correct execution trace produced by $N$ under \pnt{x}.
All other assignments can be viewed as ``erroneous'' traces under test
\pnt{x}.
\begin{definition}
\label{def:cts}
  Let $T$ be a set of tests \s{\ppnt{x}{1},\dots,\ppnt{x}{k}} where $k
  \leq 2^{|X|}$.  We will say that $T$ is a \tb{Complete Test Set
    (CTS)} for $N$ if $\Cube{x}{1} \cup \dots \cup \Cube{x}{k}$
  contains an SSA for formula $F_N \wedge z$.
\end{definition}

If $T$ satisfies Definition~\ref{def:cts}, set $\Cube{x}{1} \cup \dots
\cup \Cube{x}{k}$ ``contains'' a proof that $N \equiv 0$ and so $T$
can be viewed as complete. If $k = 2^{|X|}$, $T$ is the \ti{trivial}
CTS. In this case, $\Cube{x}{1} \cup \dots \cup \Cube{x}{k}$ contains
the trivial SSA consisting of all assignments to \V{F_N \wedge
  z}. Given an SSA $P$ of $F_N \wedge z$, one can easily generate a
CTS by extracting all different assignments to $X$ that are present in
the assignments of $P$.

\begin{example}
  Formula $F_N \wedge z$ of Example~\ref{exmp:circ} has an SSA of 21
  assignments to \V{F_N \wedge z}. They have only 5 different
  assignments to $X = \s{x_1,x_2,x_3}$.  So the set
  \s{101,100,011,010,000} of those assignments is a CTS for $N$.
\end{example}

Definition~\ref{def:cts} is meant for circuits that are not ``too
redundant''.  Its extension to the case of high redundancy is given in
Section~\ref{app:red} of the appendix.

\section{Description Of  \sas Procedure}
\label{sec:algor}
\subsection{Motivation}

  Building an SSA can be inefficient even for a small formula. This
  makes construction of a CTS for $N$ from an SSA of $F_N \wedge z$
  impractical. We address this problem by introducing procedure called
  \sas (a short for ``Semantics and Structure''). Given formula
  $G(V,W)$, \sas generates a simpler formula $H(V)$ implied by $G$ at
  the same time trying to build an SSA for $H$. We will refer to $W$
  as the set of variables to \ti{exclude}. If \sas succeeds in
  constructing an SSA of $H$, the latter is unsatisfiable and so is
  $G$.  \sas can be applied to $F_N \wedge z$ to generate tests as
  follows. Let $V$ be a subset of \V{F_N \wedge z}. First, \sas is
  applied to construct formula $H(V)$ implied by $F_N \wedge z$ and an
  SSA of $H$. Then a set of tests $T$ is extracted from this SSA.

  The test set $T$ above can be considered as a CTS for \ti{a
    projection of circuit} $N$ on $V$. On the other hand, $T$ can be
  viewed as an \ti{approximation} of a CTS for circuit $N$, since $H(V)$ is
  essentially an abstraction of formula $F_N \wedge z$. In this paper,
  we give two examples of building a test set for $N$ from an SSA of
  $H$ generated by \sas. In the first example, $V$ is the set $X$ of
  input variables. Then an SSA found by \sas for $H(X)$ is itself a
  test set. The second example is given in Subsection~\ref{ssec:bug}
  where a ``piecewise'' construction of tests is described.

\begin{example}
  Consider the circuit $N$ of Figure~\ref{fig:miter}. Assume that $V =
  X$ where $X = \s{x_1,x_2,x_3}$ is the set of input
  variables. Application of \sas to $F_N \wedge z$ produces formula
  $H(X)= (\overline{x}_1 \vee \overline{x}_3) \wedge (\overline{x}_2
  \vee \overline{x}_3) \wedge (x_1 \vee x_2) \wedge x_3$. Besides,
  \sas generates an SSA of $H$ with center \sub{p}{init}=000 that
  consists of four assignments to $X$: \s{000,001,011,101}. (The
  AC-mapping is omitted here.) These assignments form a CTS for
  projection of $N$ on $X$ and an approximation of CTS for $N$.
\end{example}
\subsection{High-level description}

 In Figure~\ref{fig:sem_str}, we describe \sas as a recursive
 procedure.  Like DPLL-like SAT-algorithms~\cite{dpll,grasp,chaff},
 \sas makes decision assignments, runs the Boolean Constraint
 Propagation (BCP) procedure and performs branching. In particular, it
 uses decision levels~\cite{grasp}. A decision level consists of a
 decision assignment to a variable and assignments to single variables
 \ti{implied} by the former.  \sas accepts formula $G(V,W)$,
 \ti{partial} assignment \pnt{a} to variables of $W$ and index $d$ of
 current decision level. In the first call of \sas, $\pnt{a} =
 \emptyset$, $d = 0$. In contrast to DPLL, \sas keeps a subset of
 variables (namely those of $V$) \ti{unassigned}.  If $G$ is
 satisfiable, \sas outputs an assignment to $V \cup W$ satisfying $G$.
 Otherwise, it returns an SSA $P$ of formula $G$, its center and an
 AC-mapping \Fi.  The latter maps $P$ to clauses of $G$ that consist
 only of variables of $V$.  (\sas derives such clauses by
 resolution\footnote{Recall that resolution is applied to clauses $C'$ and $C''$ that have
opposite literals of some variable $w$. The result of resolving $C'$
and $C''$ on $w$ is the clause consisting of all literals of $C'$ and
$C''$ but those of $w$.
}).  Hence formula $H=\Fi(P)$
 depends only of variables of $V$.  The existence of an SSA means that
 $H$ and hence $G$ are unsatisfiable.

 We will refer to a clause $C$ of $G$ as a {\boldmath
   $V$}-\tb{clause}, if $V \cap \V{C} \neq \emptyset$ and all literals
 of $W$ of $C$ (if any) are falsified in the current node of the
 search tree by \pnt{a}. If a conflict occurs when assigning variables
 of $W$, \sas behaves as a regular SAT-solver with conflict clause
 learning.  Otherwise, the behavior of \sas is different in two
 aspects.  First, after BCP completes the current decision level, \sas
 tries to build an SSA of the set of $V$-clauses. If it succeeds in
 finding an SSA, $G$ is unsatisfiable in the current branch and \sas
 backtracks.  Thus, \sas has a ``non-conflict'' backtracking mode.
 Second, in the non-conflict backtracking mode, \sas uses a
 non-conflict learning. The objective of this learning is as
 follows. In every leaf of the search tree, \sas maintains the
 invariant that the set of current $V$-clauses is
 unsatisfiable. Suppose that a $V$-clause $C$ contains a literal of a
 variable $w \in W$ that is falsified by the current partial
 assignment \pnt{a}.  If \sas unassigns $w$ during backtracking, $C$
 stops being a $V$-clause. To maintain the invariant above, \sas uses
 resolution to produce a new $V$-clause that is a descendant of $C$
 and does not contain $w$.

  %
 %
%
\begin{wrapfigure}{l}{2.1in}
\small
\vspace{-10pt}
\begin{tabbing}
  // $V$ - set of variables to keep \\
  // $W$ - set of variables to exclude \\
 // \\
aaaa\=bb\=cc\= dd\= \kill
$\mi{SemStr}(G,\pnt{a},d)$\{\\
\tb{\scriptsize{1}}\> $(\mi{Cnfl},\pnt{a}) = \mi{RunBcp}(G,\pnt{a},d)$ \\
\tb{\scriptsize{2}}\> if ($\mi{Cnfl}$) \{\\
\tb{\scriptsize{3}}\Tt  $C := \mi{CnflCls}(G,\pnt{a},d)$\\
\tb{\scriptsize{4}}\Tt  $ G := G \cup \s{C}$\\
\tb{\scriptsize{5}}\Tt  $\pnt{v} := \mi{ArbitrAssgn}(V)$ \\
\tb{\scriptsize{6}}\Tt  return($G,\mi{nil},\s{\pnt{v}},\pnt{v},\s{(\pnt{v},C)}$)~\}\\
$---------------$ \\
\tb{\scriptsize{7}}\> $(\pnt{v},P,\sub{p}{init},\Fi) := \mi{BldSSA}(G,\pnt{a}))$\\
\tb{\scriptsize{8}}\> if ($P = \mi{nil}$)\{  \\
\tb{\scriptsize{9}}\Tt if ($|\mi{\pnt{a}}| = |W|$) \\
\tb{\scriptsize{10}}\ttt return($G,\pnt{a} \cup \pnt{v},\mi{nil},\mi{nil},\mi{nil}$) \}\\
\tb{\scriptsize{11}}\> else \{\\
\tb{\scriptsize{12}}\Tt $(G,\Fi)\!:=\!\mi{Normalize}(G,\Fi,\!P,\!\pnt{a},\!d)$\\
\tb{\scriptsize{13}}\Tt return($G,\mi{nil},P,\sub{p}{init},\Fi$) \}\\
$---------------$ \\
\tb{\scriptsize{14}}\>  $w := \mi{PickVar}(W,\pnt{a})$    \\
\tb{\scriptsize{15}}\>  $d := d + 1$ \\
\tb{\scriptsize{16}}\>  $\pnt{a}_0 := \mi{AddDecLvl}(\pnt{a},\!(w=0)\!,d)$ \\
\tb{\scriptsize{17}}\>  ($G,\pnt{s},\!P_0,\sub{p}{init},\Fi_0)\!:=\!\mi{SemStr}(G,\pnt{a}_0,d)$ \\
\tb{\scriptsize{18}}\> if ($\pnt{s} \neq \mi{nil}$) return($G,\pnt{s},\mi{nil},\mi{nil},\mi{nil}$) \\
\tb{\scriptsize{19}}\> if ($w\!\not\in \V{\Fi_0(P_0)}$) \\
\tb{\scriptsize{20}}\Tt    return($G,\mi{nil},\!P_0,\sub{p}{init},\Fi_0$) \\[4pt]
\tb{\scriptsize{21}}\>  $\pnt{a}_1 := \mi{AddDecLvl}(\pnt{a},\!(w=1)\!,d)$ \\
\tb{\scriptsize{22}}\>   ($G,\pnt{s},P_1,\Fi_1) := \mi{SemStr}(G,\pnt{a}_1,d)$ \\
\tb{\scriptsize{23}}\> if ($\pnt{s} \neq \mi{nil}$) return($G,\pnt{s},\mi{nil},\mi{nil},\mi{nil}$) \\
\tb{\scriptsize{24}}\> $H_0 := \Fi_0(P_0)$; $H_1 := \Fi_1(P_1)$; \\
\tb{\scriptsize{25}}\> $(G,\!P,\!\sub{p}{init},\Fi)\!:=\!\mi{Excl}(G,\!H_0,\!H_1,\!\pnt{a},w)$  \\
\tb{\scriptsize{26}}\> return($G,\mi{nil},P,\sub{p}{init},\Fi$) \}\\
\end{tabbing} 
\vspace{-20pt}
\caption{\ti{SemStr} procedure}
\label{fig:sem_str}
\end{wrapfigure}

  \subsection{\sas in more detail}
  As shown in Figure~\ref{fig:sem_str}, \sas consists of three parts
  separated by dotted lines. In the first part (lines 1-6), \sas runs
  BCP to fill in the current decision level number $d$. Since \sas
  does not assign variables of $V$, BCP ignores clauses that contain a
  variable of $V$. If, during BCP, a clause consisting only of
  variables of $W$ gets falsified, a conflict occurs. Then \sas
  generates a conflict clause $C$ (line 3) and adds it to $G$. In this
  case, formula $H(V)$ consists simply of $C$ that is empty (has no
  literals) in subspace specified by \pnt{a}.  Any set $P =
  \s{\pnt{v}}$ where \pnt{v} is an arbitrary assignment to $V$ is an
  SSA of $H$ in subspace specified by \pnt{a}.

  If no conflict occurs in the first part, \sas starts the second part
  (lines 7-13). Here, \sas runs \ti{BldSSA} procedure to check if the
  current set of $V$-clauses is unsatisfiable by building an SSA. If
  \ti{BldSSA} fails to build an SSA (line 8), it checks if all
  variables of $W$ are assigned (line 9). If so, formula $G$ is
  satisfiable.  \sas returns a satisfying assignment (line 10) that is
  the union of current assignment \pnt{a} to $W$ and assignment
  \pnt{v} to $V$ returned by \ti{BldSSA}.  (Assignment \pnt{v}
  satisfies all the current $V$-clauses).

  If \ti{BldSSA} succeeds in building an SSA $P$ with respect to an
  AC-function \Fi and center \sub{p}{init} (line 11), \sas performs
  operation called \ti{Normalize} over formula $H$ where $H = \ac{P}$
  (line 12). After that, \sas returns. Let $w$ be the decision
  variable of the current decision level (i.e.  level number $d$).
  The objective of \ti{Normalize} is to guarantee that every clause of
  $H$ contains no more than one variable assigned at level $d$ and
  this variable is $w$. Let $C$ be a clause of $H$ that violates this
  rule.  Suppose, for instance, that $C$ has one or more literals
  falsified by \ti{implied} assignments of level $d$. In this case,
  \ti{Normalize} performs a sequence of resolution operations that
  starts with clause $C$ and terminates with a clause $C^*$ that
  contains only variable $w$.  (This is similar to the conflict
  generation procedure of a SAT-solver. It starts with a clause
  rendered unsatisfiable that has at least two literals assigned at
  the conflict level. After a sequence of resolutions, this procedure
  generates a clause where only one literal is falsified at the
  conflict level.)  Importantly, $C^*$ and $C$ are \ti{identical} as
  $V$-clauses i.e. they are different only in literals of $W$.  Clause
  $C^*$ is added to $G$ and replaces $C$ in AC-function \Fi and hence
  in $H$.
%
  %
%
\setlength{\intextsep}{5pt}
\begin{wrapfigure}{L}{1.9in}
\small
\begin{tabbing}
aaa\=b\=cc\= dd\= \kill
$\mi{Excl}(G,H_0,H_1,\pnt{a},w)$\{\\
\tb{\scriptsize{1}}\>  $H := H_0 \cup H_1$  \\
\tb{\scriptsize{2}}\>  $H^w :=  \s{C \in H |w\!\in\! \V{C}}$\\
\tb{\scriptsize{3}}\>  $H :=\!H \setminus H^w$ \\
\tb{\scriptsize{4}}\>  while (\ti{true}) \{ \\
\tb{\scriptsize{5}}\Tt   $(\pnt{v},\!P,\!\sub{p}{init},\Fi)\!:=\!\mi{BldSSA}(H,\pnt{a})$ \\
\tb{\scriptsize{6}}\Tt   if ($P \neq \mi{nil}$) return($G,P,\sub{p}{init},\Fi$)\\
\tb{\scriptsize{7}}\Tt   $C := \mi{GenCls}(H^w,\pnt{v})$ \\
\tb{\scriptsize{8}}\Tt   $H := H \cup \s{C}$ \} \\
\tb{\scriptsize{9}}\Tt   $G := G \cup \s{C}$ \} \} \\
\end{tabbing} 
\vspace{-10pt}
\caption{\ti{Excl} procedure}
\label{fig:excl_var}
\end{wrapfigure}

  If neither satisfying assignment nor SSA is found in the second
  part, \sas starts the third part (lines 14-26) where it
  branches. First, a decision variable $w$ is picked to start decision
  level number $d+1$.  \sas adds assignment $w=0$ to \pnt{a} and calls
  itself to explore the left branch (line 17).  If this call returns a
  satisfying assignment \pnt{s}, \sas ends the current invocation and
  returns \pnt{s} (line 18).  If $\pnt{s} = \mi{nil}$ (i.e. no
  satisfying assignment is found), \sas checks if the set of clauses
  $\Fi_0(P_0)$ found to be unsatisfiable in branch $w=0$ contains
  variable $w$. If not, then branch $w=1$ is skipped and \sas returns
  SSA $P_0$,~\sub{p}{init} and AC-mapping $\Fi_0$ found in the left
  branch. Otherwise, \sas examines branch $w=1$ (lines 21-23).

  Finally, \sas merges results of both branches by calling procedure
  \ti{Excl}. Formulas $H_0$ and $H_1$ specify unsatisfiable
  $V$-clauses of branches $w=0$ and $ w=1$ respectively.  This means
  that formula $H_1 \wedge H_2$ is unsatisfiable in the subspace
  specified by \pnt{a}.  However, \sas maintains a stronger invariant
  that all $V$-\ti{clauses} are unsatisfiable in subspace
  \pnt{a}. This invariant is broken after unassigning $w$ since the
  clauses of $H_1 \wedge H_2$ containing variable $w$ are not
  $V$-clauses any more. Procedure \ti{Excl} ``excludes'' $w$ to
  restore this invariant via producing new $V$-clauses obtained by
  resolving clauses of $H_1$ and $H_2$ on $w$. 

  The pseudo-code of \ti{Excl} is shown in Figure~\ref{fig:excl_var}.
  First, \ti{Excl} builds formula $H$ that consists of clauses of $H_1
  \cup H_2$ minus those that have variable $w$ (lines 1-3). Then
  \ti{Excl} tries to build an SSA $P$ of $H$ by calling procedure
  \ti{BldSSA} in a \ti{while} loop (lines 4-9). If \ti{BldSSA}
  succeeds, \ti{Excl} returns the SSA found by \ti{BldSSA}. Otherwise,
  \ti{BldSSA} returns an assignment \pnt{v} that satisfies $H$. This
  satisfying assignment is eliminated by generating a $V$-clause $C$
  falsified by \pnt{v} and adding it to $H$. Clause $C$ is generated
  by resolving two clauses of $H_1 \cup H_2$ on variable $w$. After
  that, a new iteration begins.
\raggedbottom

\section{Example Of How \sas Operates}
Let $V=\s{v_1,v_2}$, $W=\s{w_1,w_2}$ and $G(V,W)$ be a formula of 6
clauses: 
$C_1=w_1 \vee v_1$, $C_2=w_1 \vee w_2$, $C_3=\overline{w}_2 \vee v_2$,
$C_4= \overline{v}_1 \vee \overline{v}_2$,~$C_5=\overline{w}_1 \vee v_1$,
$C_6 = \overline{w}_1 \vee v_2$.

Let us consider how \sas operates on the formula above. We will
identify invocations of \sas by partial assignment \pnt{a} to $W$. For
instance, since \pnt{a} is empty in the initial call of \sas, the
latter is denoted as $\sas_{\emptyset}$. We will also use \pnt{a} as a
subscript to identify $G$ under assignment \pnt{a}.  The first part of
$\sas_{\emptyset}$ (see Figure~\ref{fig:sem_str}) does not trigger any
action because $G_{\emptyset}$ does not contain unit clauses (i.e.
unsatisfied clauses that have only one unassigned literal).  In the
second part of $\sas_{\emptyset}$, procedure \ti{BldSSA} fails to
build an SSA because the only $V$-clause of $G_{\emptyset}$ is $C_4$.
So the current set of $V$-clauses is satisfiable. Having found out
that not all variables of $W$ are assigned (line 9 of
Figure~\ref{fig:sem_str}), $\sas_{\emptyset}$ leaves the second part.

Let $w_1$ be the variable of $W$ picked in the third part for
branching (line 14). $\sas_{\emptyset}$ uses assignment $w_1 = 0$ to
start decision level number 1. (In the original call, the decision
level value is 0). Then $\sas_{(w_1=0)}$ is invoked that operates as
follows.  $G_{(w_1=0)}$ contains unit clauses $C_1=\cancel{w_1} \vee
v_1$ and $C_2=\cancel{w_1} \vee w_2$ (we crossed out literal $w_1$ as
falsified). Unit clause $C_1$ is ignored by BCP, since \sas does not
assign variables of $V$. On the other hand, BCP assigns value 1 to
$w_2$ to satisfy $C_2$. So current \pnt{a} equals $(w_1=0,w_2=1)$ and
decision level number 1 contains one decision and one implied
assignment. At this point, BCP stops. The only clause consisting
solely of variables of $W$ (clause $C_2$) is satisfied. So no conflict
occurred and $\sas_{(w_1=0)}$ finishes the first part of the code.

Current formula $G_{(w_1=0,w_2=1)}$ has the following $V$-clauses:
$C_1=\cancel{w_1} \vee v_1$, $C_3=\cancel{\overline{w}_2} \vee v_2$,
$C_4=\overline{v}_1 \vee \overline{v}_2$. This set of $V$-clauses is
unsatisfiable. \ti{BldSSA} proves this by generating a set $P$ of
three assignments: \ppnt{v}{1}=11, \ppnt{v}{2}=01, \ppnt{v}{3}=10 that
is an SSA. The center is \ppnt{v}{1} and the AC-function \Fi is
defined as \ac{\ppnt{v}{1}} = $C_4$, \ac{\ppnt{v}{2}} = $C_1$,
\ac{\ppnt{v}{3}} = $C_3$.  So formula $H = \Fi(P)$ for subspace
\pnt{a} consists of clauses $C_1,C_3,C_4$. Note that $H$ needs
normalization, since $C_3$ contains literal $\overline{w}_2$ falsified
by the \ti{implied} assignment of level 1. Procedure \ti{Normalize}
(line 12) fixes this problem. It produces new clause $C_7 = w_1 \vee
v_2$ obtained by resolving $C_3 = \overline{w_2} \vee v_2$ with clause
$C_2 = w_1 \vee w_2$ on $w_2$.  (Note that $C_2$ is the clause from
which assignment $w_2 = 1$ was derived during BCP.) Clause $C_7$ is
added to $G$. It replaces clause $C_3$ in \Fi and hence in $H$. So now
\ac{\ppnt{v}{3}} = $C_7$ and $H$ consists of clauses $C_1,C_7,C_4$. At
this point, $\sas_{(w_1=0)}$ terminates returning SSA $P$, center
~\ppnt{v}{1}, AC-mapping ~\Fi and modified $G$ to $\sas_{\emptyset}$.

Having completed branch $w_1=0$, $\sas_{\emptyset}$ invokes
$\sas_{(w_1=1)}$. Since $G_{(w_1=1)}$ does not have any unit clauses,
no action is taken in the first part. Formula $G_{(w_1=1)}$ contains
three $V$-clauses: $C_4= \overline{v}_1 \vee \overline{v}_2$,
$C_5=\cancel{\overline{w}_1} \vee v_1$ and $C_6 =
\cancel{\overline{w}_1} \vee v_2$. Procedure \ti{BldSSA} proves them
unsatisfiable by generating a set $P$ of three assignments
\ppnt{v}{1}=11, \ppnt{v}{2}=01, \ppnt{v}{3}=10 that is an SSA with
respect to center \ppnt{v}{1} and AC-function: \ac{\ppnt{v}{1}} =
$C_4$, \ac{\ppnt{v}{2}} = $C_5$, \ac{\ppnt{v}{3}} = $C_6$.  So formula
$H = \Fi(P)$ consists of clauses $C_4,C_5,C_6$. It does not need
normalization.  $\sas_{(w_1=1)}$ terminates returning SSA $P$,
~\ppnt{v}{1}, and ~\Fi to $\sas_{\emptyset}$.

Finally, $\sas_{\emptyset}$ calls \ti{Excl} to merge the results of
branches $w_1=0$ and $w_1=1$ by excluding variable $w_1$. Formulas
$H_0$ and $H_1$ passed to \ti{Excl} specify unsatisfiable sets of
$V$-clauses found in branches $w_1=0$ and $w_1=1$ respectively. Here,
$H_0=\s{C_1,C_4,C_7}$ and $H_1=\s{C_4,C_5,C_6}$. \ti{Excl} starts by
generating formulas $H^{w_1}$ and $H$ (lines 1-3 of
Figure~\ref{fig:excl_var}).  Formula $H^{w_1}=\s{C_1,C_5,C_6,C_7}$
consists of the clauses of $H_0 \cup H_1$ with variable $w_1$. Formula
$H=\s{C_4}$ is equal to $(H_0 \cup H_1) \setminus H^{w_1}$. Then
\ti{Excl} tries to build an SSA for $H$ in a \ti{while} loop (lines
4-9). Since current formula $H$ is satisfiable, a satisfying
assignment \pnt{v} is returned by \ti{BldSSA} in the first
iteration. Assume that \pnt{v}=01.  To exclude this assignment,
\ti{Excl} generates clause $C_8= v_1$ (by resolving $C_1=w_1 \vee v_1$
of $H_0$ and $C_5=\overline{w}_1 \vee v_1$ of $H_1$ on $w_1$) and adds
it to $H$ and $G$.

$H$ is still satisfiable. Thus, the satisfying assignment $\pnt{v}=10$
is returned by \ti{BldSSA} in the second iteration. To exclude it,
clause $C_9 = v_2$ is generated (by resolving $C_7 = w_1 \vee v_2$ and
$C_6 = \overline{w}_1 \vee v_2$) and added to $H$ and $G$.  In the
third iteration, \ti{BldSSA} proves $H$ unsatisfiable by generating an
SSA $P$ of three assignments \ppnt{v}{1}=11, \ppnt{v}{2}=01,
\ppnt{v}{3}=10. Assignment \ppnt{v}{1} is the center and the
AC-function is defined as \ac{\ppnt{v}{1}} = $C_4$, \ac{\ppnt{v}{2}} =
$C_8$, \ac{\ppnt{v}{3}} = $C_9$ where $C_4 = \overline{v}_1 \vee
\overline{v}_2$, $C_8=v_1$, $C_9=v_2$.  The modified formula $G$ with
$P$,~\ppnt{v}{1} and \Fi are returned by \ti{Excl} to
$\sas_{\emptyset}$. They are also returned by $\sas_{\emptyset}$ as
the final result.

\section{Application Of \sas To Testing}
\label{sec:appl_test}
Let $M$ be a multi-output combinational circuit.  In this section, we
consider some applications of \sas to testing $M$. They can be used in
two scenarios.  The first scenario is as follows.  Let $\xi$ be a
property of $M$ specified by a single-output circuit $N$. Consider the
case where $\xi$ can be proved by a SAT-solver. If one needs to check
$\xi$ only once, using the current version of \sas does not make much
sense (it is slower than a SAT-solver). Assume however that one
frequently modifies $M$ and needs to check that property $\xi$ still
holds.  Then one can apply \sas to generate a CTS for a projection of
$N$ and then re-use this CTS as a high-quality test set every time
circuit $M$ is modified (Subsection~\ref{ssec:des_changes}).

The second scenario is as follows. Assume that some properties of $M$
cannot be solved by a SAT-solver and/or one needs to verify the
correctness of circuit $M$ ``as a whole''. (In the latter case, a
SAT-solver is typically used to construct tests generating events
required by a coverage metric.) Then tests generated by \sas can be
used, for instance, to hit corner cases more often
(Subsection~\ref{ssec:corners}) or to empower a traditional test set
with CTSs for local properties of $M$
(Subsection~\ref{ssec:loc_props}).

\subsection{Verification of design changes}
\label{ssec:des_changes}

Let $M^*$ be a circuit obtained by modification of $M$. Suppose that
one needs to check whether $M^*$ is still correct. This can be done by
checking if $M^*$ is logically equivalent to $M$. However, equivalence
checking cannot be used if the functionality of $M^*$ has been
intentionally modified. Another option is to run a test set previously
generated for $M$ to verify $M^*$. Generation of CTSs can be used to
empower this option. The idea here is to re-use CTSs generated for
testing the properties of $M$ that should hold for $M^*$ as well.

Let $\xi$ be a property of $M$ that is supposed to be true for $M^*$
too.  Let $N$ be a single-output circuit specifying $\xi$ for $M$ and
$T$ be a CTS constructed to check if $N \equiv 0$. To verify if $\xi$
holds for $M^*$, one just needs to apply $T$ to circuit $N^*$
specifying property $\xi$ in $M^*$. Of course, the fact that $N^*$
evaluates to 0 for the tests of $T$ does not mean that $\xi$ holds for
$M^*$. Nevertheless, since $T$ is specifically generated for $\xi$,
there is a good chance that a test of $T$ will break $\xi$ if $M^*$ is
buggy. In Subsection~\ref{ssec:bug}, we substantiate this intuition
experimentally.
%
%
\subsection{Verification of corner cases}
\label{ssec:corners}
\setlength{\intextsep}{4pt}
\begin{wrapfigure}{l}{1.35in}
 \begin{center}
    \includegraphics[width=1.2in]{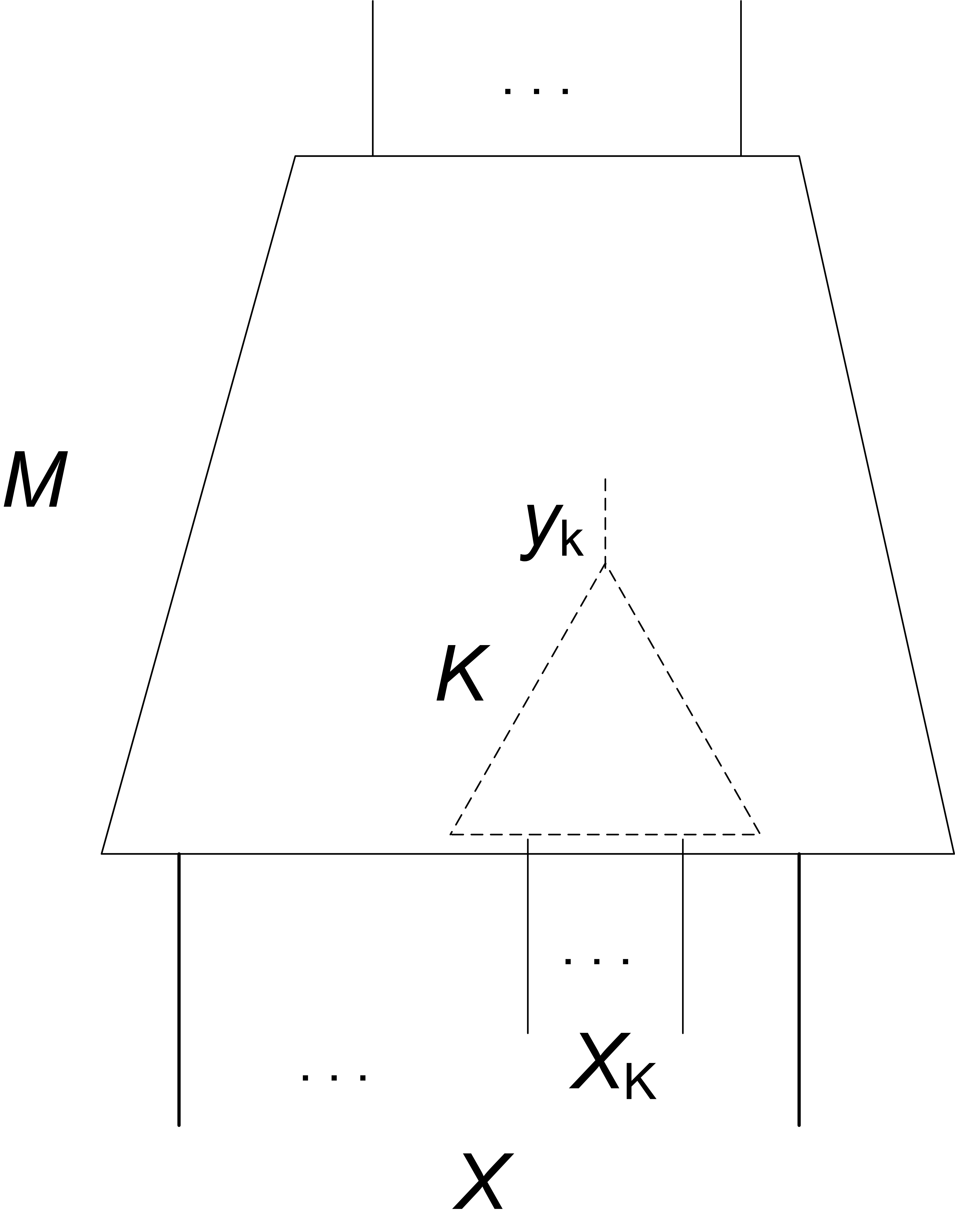}
  \end{center}
\vspace{-10pt}
\caption{Subcircuit $K$ of circuit $M$}
\vspace{30pt}
\label{fig:subcirc}
\end{wrapfigure}

Let $K$ be a single-output subcircuit of circuit $M$ as shown in
Figure~\ref{fig:subcirc}. The input variables of $K$ (set $X_K$) is a
subset of the input variables of $M$ (set $X$). Suppose that the
output of $K$ takes value 0 much more frequently then 1. Then one can
view an assignment \pnt{x} to $X$ for which $K$ evaluates to 1 as
specifying a ``corner case'' i.e. a rare event. Hitting such a corner
case even once by a random test can be very hard.  This issue can be
addressed by using a coverage metric that \ti{requires} setting the
value of $K$ to both 0 and 1.  (The task of finding a test for which
$K$ evaluates to 1, can be easily solved, for instance, by using a
SAT-solver.)  The problem however is that hitting a corner case only
once may be insufficient.

\setlength{\intextsep}{4pt}
\begin{wrapfigure}{l}{1.45in}
 \begin{center}
    \includegraphics[width=1.3in]{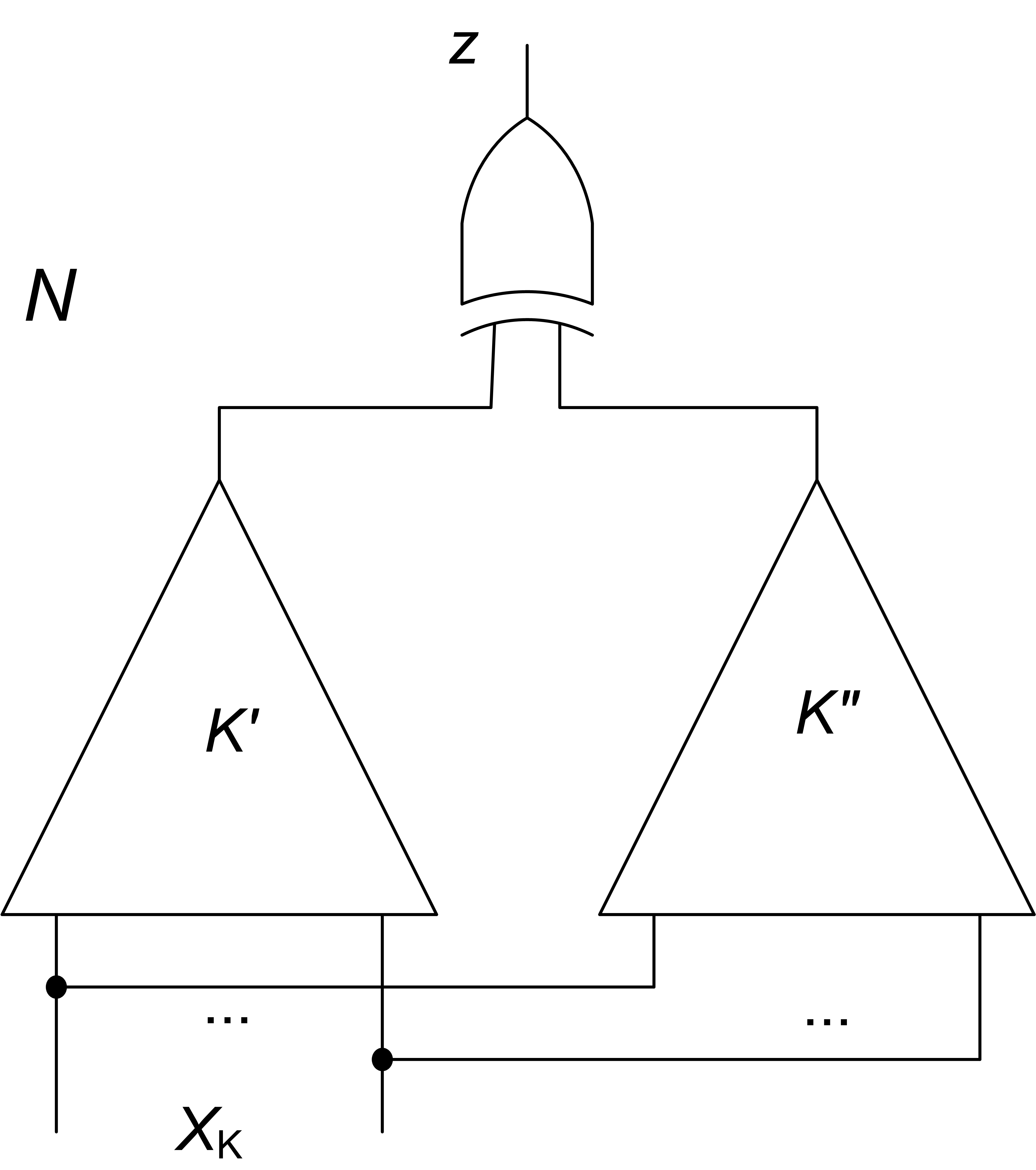}
  \end{center}
\vspace{-10pt}
\caption{The miter of circuits $K'$ and $K''$}
\label{fig:gen_miter}
\end{wrapfigure}

Ideally, it would be nice to have an option of generating a test set
where the ratio of assignments for which $K$ evaluates to 1 is higher
than in the truth table of $K$.  One can achieve this objective as
follows. Let $N$ be a miter of circuits $K'$ and $K''$ (see
Figure~\ref{fig:gen_miter}) i.e.  a circuit that evaluates to 1 iff
$K'$ and $K''$ are functionally inequivalent.  Let $K'$ and $K''$ be
two copies of circuit $K$. So $N \equiv 0$ holds. Let $T_K$ be a CTS
for projection of $N$ on $X_K$. Set $T_K$ can be viewed as a result of
``squeezing'' the truth table of $K$. Since this truth table is
dominated by assignments for which $K$ evaluates to 0, this part of
the truth table is reduced the most\footnote{One can give a more
  precise explanation of when and why using $T_K$ should work.}. So,
one can expect that the ratio of tests of $T_K$ for which $K$
evaluates to 1 is higher than in the truth table of $K$. In
Subsection~\ref{ssec:ecorners}, we substantiate this intuition
experimentally. Extending an assignment \ppnt{x}{K} of $T_K$ to an
assignment \pnt{x} to $X$ is easy e.g. one can randomly assign the
variables of $X \setminus X_K$.

\subsection{Empowering testing by adding CTSs of local properties}
\label{ssec:loc_props}
Let $\Xi = \s{\xi_1,\dots,\xi_k}$ be a set of
local\footnote{Informally, property $\xi_i$ of $M$ is ``local'' if only a fraction of
$M$ is responsible for $\xi_i$.
} properties of $M$ specified by
single-output circuits $N_1,\dots,N_k$ respectively.  Typically,
testing is used to check if circuit $M$ is correct ``as a
whole''. This notion of correctness is a conjunction of \ti{many}
properties including those of $\Xi$. Let $T$ be a test set generated
by a traditional testing procedure (e.g. driven by some coverage
metric). An obvious flaw of $T$ is that it does not guarantee that
the properties of $\Xi$ hold. This problem can be addressed by using a
formal verification procedure, e.g. a SAT-solver, to check if these
properties hold. Note, however, that proving the properties of $\Xi$
by a formal verification tool does not add any new tests to $T$ and
therefore does not make $T$ more powerful.
1
Now, assume that every property $\xi_i$ of $\Xi$ is proved by building
a CTS $T_i$ for projection of $N_i$ on its input variables. Let $T^*$
denote $T \cup T_1 \cup \dots \cup T_k$. Set $T^*$ is more powerful
than $T$ combined with proving the properties of $\Xi$ by a formal
verification tool. Indeed, in addition to guaranteeing that the
properties of $\Xi$ hold, set $T^*$ contains \ti{more tests} than $T$
and hence can identify \ti{new bugs}. In
Subsection~\ref{ssec:eloc_props}, we provide some experimental data on
using \sas to verify local properties.

\section{Application Of \sas To Sat-Solving}
\label{sec:appl_sat}
Conflict Driven Clause Learning (CDCL)~\cite{grasp,chaff} has played a
major role in boosting the performance of modern SAT-solvers. However,
CDCL has the following flaw. Suppose one needs to check satisfiability
of formula $G$ equal to $A(X,Y) \wedge B(Y,Z)$ where $|Y|$ is much
smaller than $|X|$ and $|Z|$. One can view $G$ as describing
interaction of two blocks specified by $A$ and $B$ where $Y$ is the
set of variables via which these blocks communicate.  Sets $X$ and $Z$
specify the internal variables of these blocks.  A CDCL SAT-solver
tends to produce clauses that relate variables of $X$ and $Z$ turning
$G$ into a ``one-block'' formula. This can make finding a short proof
much harder.  (Intuitively, this flaw of CDCL becomes even more
detrimental when a formula describes interaction of $n$ small blocks
where $n$ is much greater than 2.) A straightforward way to solve this
problem is to avoid resolving clauses on variables of $Y$. However, a
resolution-based SAT-solver cannot do this. A goal of a resolution
proof is to generate an empty clause, which cannot be achieved without
resolving clauses on variables of $Y$.

\sas does not have the problem above since it can just replace
resolutions on variables of $Y$ with building an SSA for clauses
depending on $Y$. Then, instead of generating an empty clause, \sas
produces an unsatisfiable formula $H(Y)$ implied by $G$. Thus, \sas
can facilitate finding good proofs.  However, \sas has another issue
to address.  Currently \sas computes SSAs ``explicitly'' i.e. in terms
of single assignments. The proof system specified by such SSAs is much
weaker than resolution. This can negate the positive effect of
preserving the structure of $G$. A potential solution of this problem
is to compute an SSA in clusters e.g. cubes of assignments where a
cube can contain an exponential number of assignments. This makes SSAs
a more powerful proof system. (For instance, in~\cite{ssp}, the
machinery of SSAs is used to efficiently solve pigeon-hole formulas
that are hard for resolution.)  Computing SSAs in clusters is far from
trivial and \sas can be used as a starting point in this line of
research.

\section{Experiments}
\label{sec:exper}

In this section, we describe results of four experiments. In the first
experiment (Subsection ~\ref{ssec:cts}), we compute CTSs for circuits
and their projections.  In Subsection~\ref{ssec:bug}, we describe the
second experiment where \sas is used for bug detection. In particular,
we introduce a method for ``piecewise'' construction of
tests. Importantly, this method has the potential of being as scalable
as SAT-solving and so could be used to generate high-quality tests for
very large circuits.  In the third experiment,
(Subsection~\ref{ssec:ecorners}) we use CTSs to test corner cases. In
the last experiment (Subsection~\ref{ssec:eloc_props}), we apply \sas
to verification of local properties.
In the first three experiments, we used miters i.e. circuits
specifying the property of equivalence checking (see
Figure~\ref{fig:gen_miter}).  In the fourth experiment, we tested
circuits specifying the property that an implication between two
formulas holds.
\subsection{A few remarks about current implementation of \sas}
\label{ssec:implem}

Let \sas be applied to $G(V,W)$ to produce a formula $H(V)$ and its
SSA. As we mentioned in Section~\ref{sec:algor}, when assigning values
to variables of $W$, \sas behaves almost like a regular SAT-solver. So
one can use the techniques employed by state-of-the-art SAT-solvers to
enhance their performance. However, to make implementation simpler and
easier to modify, we have not used those techniques in \sas. For
instance, when a variable is assigned a value (implied or decision), a
separate node of the search tree is created, no watched literals are
used to speed up BCP and so on.

Currently, \sas does not re-use SSAs obtained in the previous leafs of
the search tree. After backtracking, \sas starts building an SSA from
scratch. On the other hand, it is quite possible that, say, an SSA of
100,000 assignments generated in the right branch $w=1$ could have
been obtained by making minor changes in the SSA of the left branch
$w=0$.  Implementation of SSA re-using should boost the performance of
\sas (see Section~\ref{app:reuse} of the appendix).
\subsection{Computing CTSs for circuits and projections}
\label{ssec:cts}

The objective of the first experiment was to give examples of circuits
with non-trivial CTSs and to show that computing a CTS for a
\ti{projection} of $N$ is much more efficient than for $N$.  The miter
$N$ of circuits $M'$ and $M''$ (like the one shown in
Figure~\ref{fig:gen_miter} for circuits $K'$ and $K''$) we used in
this experiment was obtained as follows. Circuit $M'$ was a subcircuit
extracted from the transition relation of an HWMCC-10 benchmark. (The
motivation was to use realistic circuits.) For the nine miters we used
in this experiment, circuit $M'$ was extracted from nine different
transition relations.  Circuit $M''$ was obtained by optimizing $M'$
with ABC, a high-quality tool developed at UC Berkeley~\cite{abc}.

The results of the first experiment are shown in
Table~\ref{tbl:cts}. The first column of Table~\ref{tbl:cts} lists the
names of the examples. The second and third columns give the number of
input variables and that of gates in $N$. The following group of three
columns provide results of computing a CTS for $N$. This CTS was
obtained by applying \sas to formula $F_N \wedge z$ with an empty set
of variables to exclude. In this case, the resulting formula $H$ is
equal to $F_N \wedge z$ and \sas just constructs its SSA. The first
column of this group gives the size of the SSA found by \sas.  The
second column shows the number of different assignments to $X$ in the
assignments of this SSA. (Recall that $X$ is the set of input
variables of $N$.) The third column of this group gives the run time
of \sas. The last two columns of Table~\ref{tbl:cts} describe results
of computing CTS for a projection of $N$ on $X$. We will denote this
projection by {\boldmath $N^X$}. This CTS is obtained by applying \sas
to $F_N \wedge z$ using $Y \cup z$ as the set of variables to exclude
(where $Y$ specifies the set of internal variables of $N$).  The first
column of the two gives the size of the SSA generated for formula
$H(X)$ by \sas. The second column shows the run time of \sas.

%
%
%
\begin{wraptable}{l}{2.65in}
\small
\caption{\ti{CTSs for circuits and their projections}}
\vspace{-5pt}
\scriptsize
\begin{center}
\begin{tabular}{|c|c|c|c|c|c|c|c|} \hline
name    & \#inp\_ & \#ga-   & \multicolumn{3}{c|}{CTS for original} & \multicolumn{2}{c|}{CTS  for }  \\
        &  vars   & tes      & \multicolumn{3}{c|}{circuit} & \multicolumn{2}{c|}{projection}  \\ \cline{4-8}
        &         &           & \#SSA        & \#tests       &  time       & \#tests   &  time  \\ 
        &         &           &              &               &  (s.)       &           & (s.) \\ \hline
ex1     &  12     &  54       & 125,734      &  500          &   0.3       &  28       &  0.01   \\ \hline
ex2     &  14     &  59       & 262,405      &  3,231        &   0.6       &  1,101    &  0.04 \\ \hline
ex3     &  16     &  53       & 438,985      &  7,211        &   1.0       &  867      &  0.01 \\ \hline
ex4     &  16     &  63       & 3,265,861    &  15,868       &   9.4       &  1,452    &  0.02 \\ \hline
ex5     &  17     &  66       & 94,424       &  952          &   0.3       &  137      &  0.01 \\ \hline
ex6     &  40     &  117      &  memout      &    $*$        &   $*$       &  589      &  0.02 \\ \hline
ex7     &  40     &  454      &  memout      &    $*$        &   $*$       & 112,619   &  5.9 \\ \hline
ex8     &  50     &  317      &  memout      &    $*$        &   $*$       & 211,650   &  4.1 \\ \hline
ex9     &  55     &  215      &  memout      &    $*$        &   $*$       & 6,267     &  0.1 \\ \hline
\end{tabular}                
\end{center}
\vspace{-5pt}
\label{tbl:cts}
\end{wraptable}

For circuits ex1,..,ex5, \sas managed to build non-trivial CTSs for
the original circuits.  Their size is much smaller than $2^{|X|}$.
For instance, the trivial CTS for ex5 consists of $2^{17}$=131,072
tests, whereas \sas found a CTS of 952 tests. (So, to prove $M'$ and
$M''$ equivalent it suffices to run 952 out of 131,072 tests.) For
circuits ex6,..,ex9, \sas failed to build a non-trivial CTS due to
memory overflow. On the other hand, \sas built a CTS for projection
$N^X$ for all nine examples. Table~\ref{tbl:cts} shows that finding a
CTS for $N^X$ takes much less time than for $N$. In
Subsection~\ref{ssec:bug}, we demonstrate that although a CTS for
$N^X$ is only an approximation of a CTS for $N$, it makes a
high-quality test set.

\subsection{Using CTSs to detect bugs}
\label{ssec:bug}

%
%
%
\begin{wraptable}{L}{2.6in}
\small
\caption{\ti{Bug detection}}
\vspace{-5pt}
\scriptsize
\begin{center}
\begin{tabular}{|c|c|c|c|c|c|c|c|} \hline
name    & \#inp\_ & \#ga-   & \multicolumn{2}{c|}{random       } & \multicolumn{3}{c|}{test generation}  \\
        &  vars   & tes     & \multicolumn{2}{c|}{testing} & \multicolumn{3}{c|}{by \sas}  \\ \cline{4-8}
        &         &         & \#tests            & time          &  stra-    & \#tests   &  time  \\ 
        &         &         &  $ \times\!10^6$    & (s.)          &  tegy     &          & (s.) \\ \hline
 ex10   &  37     & 73      &  $>100$             & 181           &  1        & 254      &  0.02   \\ \hline
 ex11   &  39     & 155     &  $>100$             & 466           &  1        & 1,742    &  0.1    \\ \hline
 ex12   &  41     & 591     &  $>100$             & 826           &  1        & 25,396   &  2.2    \\ \hline
 ex13   &  42     & 307     &  $>100$             & 725           &  2        & 4,021    &  1.1    \\ \hline
 ex14   &  50     & 217     &  $>100$             & 489           &  2        & 10,147   &  7.2    \\ \hline
 ex15   &  50     & 249     &  $>100$             & 1,290         &  1        & 41,048   &  1.3    \\ \hline
 ex16   &  52     & 1,003   &  $>100$             & 707           &  2        & 707,589  &  106    \\ \hline
 ex17   &  67     & 405     &  $>100$             & 2,194         &  2        & 2,281    &  1.7    \\ \hline   
 ex18   &  70     & 265     &  $>100$             & 1,312         &  2        & 5,413    &  0.7    \\ \hline
\end{tabular}                
\end{center}
\vspace{-5pt}
\label{tbl:flt_tst}
\end{wraptable}

In the second experiment, we used \sas to generate tests exposing
inequivalence of circuits. Let $N^*$ denote the miter of circuits $M'$
and $M''$ where $M''$ is obtained from $M'$ by introducing a bug.
(Similarly to Subsection~\ref{ssec:bug}, $M'$ was extracted from the
transition relation of a HWMCC-10 benchmark and for the nine examples
of Table~\ref{tbl:flt_tst} below we used nine different transition
relations.) Denote by $N$ the miter of circuits $M'$ and $M''$ where
$M''$ is just a copy of $M'$.  In this experiment, we applied the idea
of Subsection~\ref{ssec:des_changes}: reuse the test set $T$ generated
to prove $N \equiv 0$ to test if $N^* \equiv 0$ holds. To run a single
test \pnt{x}, we used Minisat 2.0~\cite{minisat,minisat2.0}. Namely,
we added unit clauses specifying \pnt{x} to formula $F_{N^*} \wedge z$
and checked its satisfiability.

To generate $T$ we used two strategies. In strategy 1, $T$ was
generated as a CTS for projection $N^X$. Strategy 2 was employed when
\sas failed to build a CTS for $N^X$ due to memory overflow or
exceeding a time limit.  In this case, we partitioned $X$ into subsets
$X_1,\dots,X_k$ and computed sets $T_1,\dots,T_k$ where $T_i$ is a CTS
for projection $N^{X_i}$. (In the examples where we used strategy 2,
the value of $k$ was 2 or 3). The Cartesian product $T_1 \times \dots
\times T_k$ forms a test set for $N$.  Instead of building the entire
set $T$, we randomly generated tests of $T$ one by one as follows.
The next test \pnt{x} of $T$ to try was formed by taking the union of
$\pnt{x}_i$,$i=1,\dots,k$ randomly picked from corresponding
$T_i$,$i=1,\dots,k$. Note that in the extreme case where every $X_i$
consists of one variable, strategy 2 reduces to generation of random
tests. Indeed, let $X_i=\s{x_i}$,$i=1,\dots,k$ where $k = |X|$.  Then
formula $H(X_i)$ for projection $N^{X_i}$ is equal to $x_i \wedge
\overline{x}_i$. The only SSA for $H(X_i)$ is trivial and consists of
assignments $x_i=0$ and $x_i=1$ (and so does $T_i$). By randomly
choosing a test of $T_i$ one simply randomly assigns 0 or 1 to $x_i$.

\setlength{\intextsep}{4pt}
\begin{wrapfigure}{l}{1in}
 \begin{center}
   \includegraphics[width=0.8in,height=1.3in]{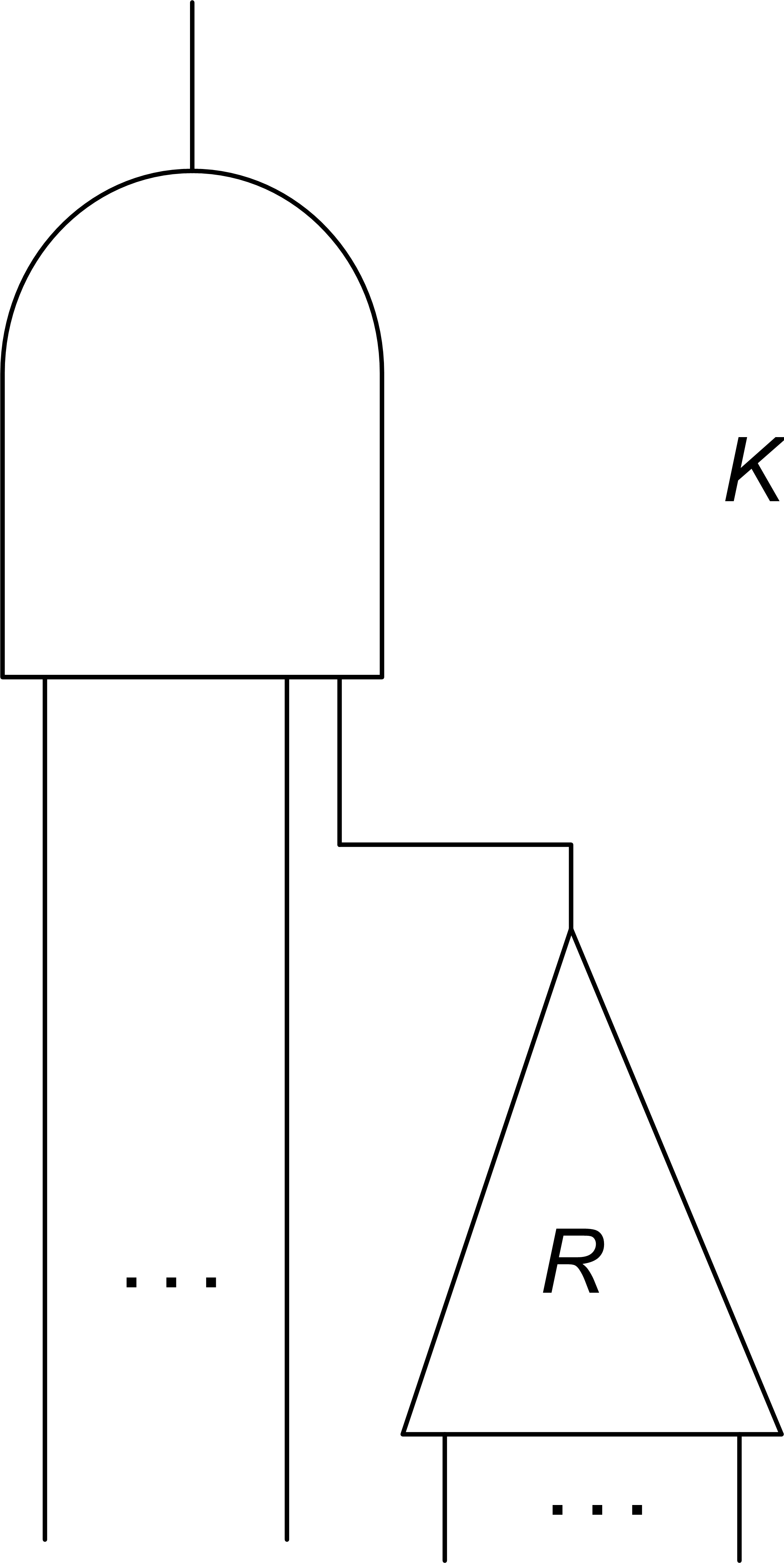}
  \end{center}
\vspace{-10pt}
\caption{Circuit $K$ whose output value is biased to 0}
\label{fig:corners}
\end{wrapfigure}

We compared our approach with random testing on small circuits. Our
objective was to show that although random testing is much more
efficient (test generation is very cheap), testing based on CTSs is
much more \ti{effective}. The majority of faults we tried was easy for
both approaches. In Table~\ref{tbl:flt_tst}, we list some examples
that turned out to be hard for random testing. The first three columns
are the same as in Table~\ref{tbl:cts}. The next two columns describe
the performance of random testing: the number of tests we tried (in
millions) and the time taken by Minisat to run all tests. The last
three columns describe the performance of our approach. The first
column of these three shows whether strategy 1 or 2 was used. The
second column gives the number of tests from $T$ one needed to run
before finding a bug. (Thus, this number is smaller than $|T|$.) The
last column of these three shows the total run-time that consists of
the time taken by \sas to generate $T$ and the time taken by Minisat
to run tests.

Table~\ref{tbl:flt_tst} shows that tests extracted from CTSs for
projections of $N$ are very effective. The fact that these tests are
effective even for strategy 2 is very encouraging for the following
reason. Computing a CTS for a projection $N^V$ where $V$ is small is
close to regular SAT-solving. (They become identical if $V =
\emptyset$.) Implementation of improvements mentioned in
Subsection~\ref{ssec:implem} should make computing a CTS for $N^V$
almost as scalable as SAT-solving. Thus, by breaking $X$ into
relatively small subsets $X_1,\dots,X_k$ and using piecewise
construction of tests as described above, one will get an effective
test set that can be efficiently computed even for very large
circuits.

\subsection{Using CTSs to check corner cases}
\label{ssec:ecorners}
%
%
%
\begin{wraptable}{L}{2.9in}
\small
\caption{\ti{Using CTSs for checking corner cases}}
\vspace{-5pt}
\scriptsize
\begin{center}
\begin{tabular}{|l|c|c|c|c|c|c|c|c|c|} \hline
name    & \#inp\_ & and & \#ga-   & \multicolumn{3}{c|}{random testing} & \multicolumn{3}{c|}{test generation}  \\
        &  vars   & inps & tes    & \multicolumn{3}{c|}{} & \multicolumn{3}{c|}{by \sas}  \\ \cline{5-10}
        &         &      &        & \#te-    & \#hi-  &   time             & \#te- & \#hits  &  time  \\ 
        &         &      &        & sts      & ts        & (s.)               & sts   &        & (s.) \\ \hline
ex19    &  50     & 10   & 72     &   $10^5$   &  54     &  0.6               &   832   &  51    &  0.03  \rule{0pt}{2.8mm}  \\ \hline
ex19*   &  60     & 20   & 72     &   $10^7$   &  0      &  65                &   1,803 &  207   &  0.1  \rule{0pt}{2.8mm}   \\ \hline
ex20   &  50     &  10  &  160   &    $10^5$  &   5     &  1.3               & 21,496  & 1,303  &  0.4  \rule{0pt}{2.8mm}  \\ \hline
ex20*  &  60     &  20  &  160   &    $10^7$  &   0     &  129               & 161,195 & 10,036 & 3.1  \rule{0pt}{2.8mm}   \\ \hline
ex21   &  65     &  10  &  108   &    $10^5$  &  68     &  0.8               & 49,947  & 4,168  &  1.2  \rule{0pt}{2.8mm}    \\ \hline
ex21*  &  75     &  20  &  108   &    $10^7$  &  0      &  81                & 44,432  & 3,528  &  1.2  \rule{0pt}{2.8mm}    \\ \hline
ex22   &  51     &  10  &  296   &    $10^5$  &  81     &  1.8               & 50,388  & 4,560  &  4.9  \rule{0pt}{2.8mm}    \\ \hline
ex22*  &  61     &  20  &  296   &    $10^7$  &  0      &  184               & 235,452 & 22,326 &  26   \rule{0pt}{2.8mm}    \\ \hline
ex23   &  60     &  10  &  125   &    $10^5$  &  43     &  1.2               & 6,834   & 259    &  0.2  \rule{0pt}{2.8mm}    \\ \hline
ex23*  &  70     &  20  &  125   &    $10^7$  &  0      &  122               & 21,083  & 1,807  &  0.4  \rule{0pt}{2.8mm}    \\ \hline
\end{tabular}                
\end{center}
\vspace{-5pt}
\label{tbl:corners}
\end{wraptable}

In the third experiment, we used CTSs to test corner cases (see
Subsection~\ref{ssec:corners}). First we formed a circuit $K$ that
evaluates to 0 for almost all input assignments. So the input
assignments for which $K$ evaluates to 1 specify ``corner cases''.
Then we compared the frequency of hitting the corner cases of $K$ by
random testing and by tests of a set $T$ built by \sas. The test set
$T$ was obtained as follows. Let $N$ be the miter of copies $K'$ and
$K''$ (see Figure~\ref{fig:gen_miter}). Set $T$ was generated as a CTS
for the projection of $N$ on its input variables.

Circuit $K$ was formed as follows. First, we extracted a circuit $R$
as a subcircuit of a transition relation (as described in the
previous subsections). Then we formed circuit $K$ by composing an
n-input AND gate and circuit $R$ as shown in Figure~\ref{fig:corners}.
Circuit $K$ outputs 1 only if $R$ evaluates to 1 and the first $n-1$
inputs variables the AND gate are set to 1 too. So the input
assignments for which $K$ evaluates to 1 are ``corner cases''.

The results of our experiment are given in
Table~\ref{tbl:corners}. The first column specifies the name of an
example. The next two columns give the total number of input variables
of $K$ and the number of input variables in the multi-input AND gate
(see Figure~\ref{fig:corners}). The next three columns describe the
performance of random testing. The first column of the three gives the
total number of tests. The next column shows the number of times
circuit $K$ evaluated to 1 (i.e. a corner case was hit). The last
column of the three gives the total run time. The last three columns
of Table~\ref{tbl:corners} describe the results of \sas. The first
column of the three shows the size of a CTS generated as described
above. The next column gives the number of times a corner case was
hit. The last column shows the total run time (that also includes the
time used to generate the CTS).

The examples of Table~\ref{tbl:corners} were generated in pairs that
shared the same circuit $R$ and were different only the size of the
AND gate (see Figure~\ref{fig:corners}). For instance, in ex19 and
ex19* we used 10-input and 20-input AND gates
respectively. Table~\ref{tbl:corners} shows that for circuits with
10-input AND gates, random testing was able to hit corner cases but
the percentage of those events was very low. For instance, for ex19,
only for 0.05\% of tests the output value of $K$ was 1 (54 out of
$10^5$ tests). The same ratio for tests generated by \sas was 6.12\%
(51 out of 832 tests). A significant percentage of tests generated by
\sas hit corner cases even in examples with 20-input AND gates in
sharp contrast to random testing that failed to hit a single corner
case.

\subsection{Using CTSs to verify local properties}
\label{ssec:eloc_props}
In the last experiment, we used \sas to build CTSs for local
properties (see Subsection~\ref{ssec:loc_props}).  Our objective here
was just to show that even the current implementation of \sas was
powerful enough to generate CTSs for local properties of non-trivial
circuits.

\begin{wraptable}{L}{2.8in}
  \small
\caption{\ti{Tests for local properties}}
\scriptsize
\vspace{-5pt}
\begin{center}
\begin{tabular}{|l|c|c|c|c|c|c|} \hline
HWMCC-10             & \#inp\_ &\#lat- &\#gates & $|C|$ & \#tests & time     \\
benchmark            & vars    & ches  &        &        &         &  s.       \\ \hline
\ti{nusmvbrp}         &  11     & 52    & 518    &  3     & 8,690   &  0.7    \\ \hline
\ti{cmugigamax}     &  34     & 29    & 646    &  4     & 1,158   &  0.2    \\ \hline
\ti{kenoopp1}         &  49     & 51    & 619    &  2     & 84      &  0.5     \\ \hline
\ti{kenflashp01}    &  61     & 57    & 1,292  &  7     & 46      &  0.9    \\ \hline
\ti{nusmvguidancep1}  &  84     & 86    & 1,823  &  3     & 767     &  1.2   \\ \hline
\ti{visprodcellp01}   &  30     &  78   & 2,807  &  2     & 534     &  1.4 \\ \hline
\ti{pdtswvroz10x6p1}  &  7      & 81    & 3,088  &  4     & 76      &  0.1 \\ \hline
\ti{pdtvissoap2}      &  21     & 205   & 4,333  &  2     & 6,408   &  1.6 \\ \hline
\ti{pdtvissfeistel}   &  68     & 361   & 9,976  &  2     & 5,078   &  0.1 \\ \hline

\end{tabular}                
\end{center}
\label{tbl:loc_props}
\vspace{-5pt}
\end{wraptable}

In the experiment, we tested local properties defined as follows.  Let
$M_T$ be a combinational circuit specifying a transition relation
$T(X,S,Y,S')$.  Here $S$ and $S'$ are sets of the present and next
state variables, and $X$ and $Y$ are sets of the combinational input
and internal variables respectively. So $X \cup S$ and $S'$ specify
the input and output variables of $M_T$ respectively.  Let $P$ be a
set of clauses specifying an inductive invariant for $T$. That is
$P(S) \wedge T \rightarrow P(S')$. Let $C$ be a clause of $P$. Then
$P(S) \wedge T \rightarrow C(S')$.  This implication can be viewed as
a \ti{property} of circuit $M_T$.  We will refer to it as a property
specified by clause $C$ (and predicate $P$). It
states\footnote{\label{foot:loc_prop}
Let $N$ be the circuit obtained by composing $M_T$ and a $|C|$-input
AND gate representing the negation of $C$. Then $N$ evaluates to 1 iff
the output of $M_T$ falsifies $C$. Proving $P(S) \wedge T \rightarrow
C(S')$ reduces to showing that $N \equiv 0$ for every input assignment
satisfying $P$. This is a variation of the problem we consider in this
paper (i.e. checking if $N \equiv 0$ holds). Fortunately, this
variation of the original problem can be solved by \sas.

} that for every input assignment
satisfying $P$, the output assignment of $M_T$ satisfies $C$.
Typically, $C$ is a short clause i.e. the number of literals of $C$ is
much smaller than $|S'|$. If only a small part of $M_T$ feeds the
output variables present in $C$, then the property specified by $C$ is
\ti{local}.

Table~\ref{tbl:loc_props} shows the results of our experiment. The
first column gives the name of an HWMCC-10 benchmark specified by
$M_T$. The next three columns show the number of input combinational
variables, state variables and gates in $M_T$. The next column gives
the number of literals of clause $C$ randomly picked from an inductive
invariant (generated by IC3~\cite{ic3}). The last two columns describe
the results of \sas in building a CTS for a projection of circuit $N$
defined in Footnote~\ref{foot:loc_prop} on the set of input variables
(i.e. on $X \cup S$).  These columns describe the size of the CTS and
the run time taken by \sas to build it. Table~\ref{tbl:loc_props}
shows that \sas managed to build CTSs for local properties of
non-trivial circuits (e.g. for circuit \ti{pdtvissfeistel} that has
9,976 gates and 361 latches).

\section{Background}
\label{sec:background}
As we mentioned earlier, the objective of applying a test to a circuit
is typically to check if the output assignment produced for this test
is correct. This notion of correctness usually means satisfying the
conjunction of \ti{many} properties of this circuit. For that reason,
one tries to spray tests uniformly in the space of all input
assignments. To avoid generation of tests that for some reason should
be or can be excluded, a set of constraints can be
used~\cite{cnst_rand}. Another way to improve the effectiveness of
testing is to run many tests at once as it is done in symbolic
simulation~\cite{SymbolSim}. Our approach is different from those
above in that it is ``property-directed'' and hence can be used to
generate property-specific tests.

The method of testing introduced in~\cite{bridging} is based on the
idea that tests should be treated as a ``proof encoding'' rather than
a sample of the search space. (The relation between tests and proofs
have been also studied in software verification, e.g.
in~\cite{UnitTests,godefroid,Beckman}). A flaw of this approach is
that testing is treated as a second-class citizen whose quality can be
measured only by a formal proof it encodes. In this paper, we take a
different point of view where testing becomes the \ti{part} of a formal
proof that performs structural derivations.

In~\cite{ken03}, it was shown that Craig's interpolation~\cite{craig}
can be used in model checking. An efficient procedure for extraction
of an interpolant from a resolution proof was given
in~\cite{pudlak,ken03}. A flaw of this procedure is that the size of
this interpolant strongly depends on the quality of the proof. As we
mentioned in Section~\ref{sec:appl_sat}, \sas offers a new way to solve
formulas with structure.  In particular, \sas can be used to compute
interpolants. Let formula $G(X,Y,Z)$ be equal to $A(X,Y) \wedge
B(Y,Z)$ and one applies \sas to solve formula $G$ by excluding the
variables of $X \cup Z$. Then formula $H(Y)$ produced from $G$ by \sas
can be represented as $H_1 \wedge H_2$ where $H_1$ and $H_2$ are
interpolants for $A$ and $B$ respectively. That is $A \rightarrow H_1
\rightarrow \overline{B}$ and $B \rightarrow H_2 \rightarrow
\overline{A}$. (This is due to the fact that \sas forbids resolutions
on variables of $Y$.)  An advantage of \sas is that it takes into
account formula structure and hence can potentially produce
high-quality interpolants.  However, currently, using \sas for
interpolant generation does not scale as well as extraction of an
interpolant from a proof.

Reasoning about SAT in terms of random walks was pioneered
in~\cite{rand_walk}. The centered SSAs we introduce in this paper bear
some similarity to sets of assignments generated in de-randomization
of Sch\"oning's algorithm~\cite{balls}. Typically, centered SSAs are
much smaller than uncentered SSAs introduced in~\cite{ssp}. A big
advantage of the uncentered SSA though is that its definition
facilitates computing an SSA in clusters of assignments (rather than
single assignments).


\section{Conclusion}
We consider the problem of finding a Complete Test Set (CTS) for a
combinational circuit $N$ that is a test set proving that $N \equiv
0$.  We use the machinery of stable sets of assignments to derive
non-trivial CTSs i.e. ones that do not include all possible input
assignments. The existence of non-trivial CTSs implies that it is more
natural to consider testing as structural rather than semantic
derivation (the former being derivation of a property that cannot be
expressed in terms of the truth table). Since computing a CTS for the
entire circuit $N$ is impractical, we present a procedure called \sas
that computes a CTS for a projection of $N$ on a subset of its
variables. The importance of \sas is twofold. First, it can be used
for generation of effective test sets. In particular, we describe a
procedure for ``piecewise'' construction of tests that can be
potentially applied to very large circuits. Second, \sas can be used
as a starting point in designing verification tools that efficiently
combine structural and semantic derivations.

\bibliographystyle{plain}
\bibliography{short_sat,local}

\begin{thebibliography}{10}

\bibitem{Beckman}
N.~Beckman, A.~Nori, S.~Rajamani, R.~Simmons, S.~Tetali, and A.~Thakur.
\newblock Proofs from tests.
\newblock {\em IEEE Transactions on Software Engineering}, 36(4):495--508, July
  2010.

\bibitem{ic3}
A.~R. Bradley.
\newblock Sat-based model checking without unrolling.
\newblock In {\em VMCAI}, pages 70--87, 2011.

\bibitem{SymbolSim}
R.~Bryant.
\newblock Symbolic simulation---techniques and applications.
\newblock In {\em DAC-90}, pages 517--521, 1990.

\bibitem{craig}
W.~Craig.
\newblock Three uses of the herbrand-gentzen theorem in relating model theory
  and proof theory.
\newblock {\em The Journal of Symbolic Logic}, 22(3):269--285, 1957.

\bibitem{balls}
E.~Dantsin, A.~Goerdt, E.~Hirsch, R.~Kannan, J.~Kleinberg, C.~Papadimitriou,
  P.~Raghavan, and U.~Sch\"oning.
\newblock A deterministic (2−2/(k+1))n algorithm for k-sat based on local
  search.
\newblock {\em Theoretical Computer Science}, 289(1):69 -- 83, 2002.

\bibitem{dpll}
M.~Davis, G.~Logemann, and D.~Loveland.
\newblock A machine program for theorem proving.
\newblock {\em Communications of the ACM}, 5(7):394--397, July 1962.

\bibitem{minisat}
N.~E{\'e}n and N.~S{\"o}rensson.
\newblock An extensible sat-solver.
\newblock In {\em SAT}, pages 502--518, Santa Margherita Ligure, Italy, 2003.

\bibitem{UnitTests}
C.~Engel and R.~H{\"a}hnle.
\newblock Generating unit tests from formal proofs.
\newblock In {\em TAP}, pages 169--188, 2007.

\bibitem{godefroid}
P.~Godefroid and N.~Klarlund.
\newblock Software model checking: Searching for computations in the abstract
  or the concrete.
\newblock In {\em Integrated Formal Methods}, pages 20--32, 2005.

\bibitem{ssp}
E.~Goldberg.
\newblock Testing satisfiability of cnf formulas by computing a stable set of
  points.
\newblock In {\em Proc. of CADE-02}, pages 161--180, 2002.

\bibitem{bridging}
E.~Goldberg.
\newblock On bridging simulation and formal verification.
\newblock In {\em VMCAI-08}, pages 127--141, 2008.

\bibitem{cnst_rand}
N.~Kitchen and A.Kuehlmann.
\newblock Stimulus generation for constrained random simulation.
\newblock In {\em ICCAD-07}, pages 258--265, 2007.

\bibitem{grasp}
J.~Marques-Silva and K.~Sakallah.
\newblock Grasp -- a new search algorithm for satisfiability.
\newblock In {\em ICCAD-96}, pages 220--227, 1996.

\bibitem{ken03}
K.~L. Mcmillan.
\newblock Interpolation and sat-based model checking.
\newblock In {\em CAV-03}, pages 1--13. Springer, 2003.

\bibitem{chaff}
M.~Moskewicz, C.~Madigan, Y.~Zhao, L.~Zhang, and S.~Malik.
\newblock Chaff: engineering an efficient sat solver.
\newblock In {\em DAC-01}, pages 530--535, New York, NY, USA, 2001.

\bibitem{rand_walk}
C.~H. Papadimitriou.
\newblock On selecting a satisfying truth assignment.
\newblock In {\em 32nd Annual Symposium of Foundations of Computer Science},
  pages 163--169, Oct 1991.

\bibitem{pudlak}
P.~Pudlak.
\newblock Lower bounds for resolution and cutting plane proofs and monotone
  computations.
\newblock {\em Journal of Symbolic Logic}, 62(3):981--998, 1997.

\bibitem{abc}
Berkeley~Logic Synthesis and Verification Group.
\newblock {ABC}: A system for sequential synthesis and verification, 2017.
\newblock http://www.eecs.berkeley.edu/$\sim$alanmi/abc.

\bibitem{minisat2.0}
Minisat2.0.
\newblock \url{http://minisat.se/MiniSat.html}.

\end{thebibliography}
\vspace{15pt}
\appendix
\noindent{\large \tb{Appendix}}
\section{Proofs}
\label{app:proofs}
\setcounter{proposition}{0}
\begin{proposition}
 Formula $H$ is unsatisfiable iff it has an SSA.
\end{proposition}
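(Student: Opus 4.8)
The plan is to prove the two directions separately. For the ``only if'' direction (unsatisfiable $\Rightarrow$ has an SSA), I would appeal to the machinery already set up in the excerpt, namely procedure \ti{BuildSSA} of Figure~\ref{fig:bld_ssa}. First I would argue termination: each assignment to \V{H} is added to $E$ at most once, since an assignment enters $Q$ only through line~10 where members of $E$ are explicitly removed, and once examined it is placed in $E$ at line~12; hence the \ti{while} loop runs at most $2^{|\V{H}|}$ times. Upon termination there are two cases. If the procedure exits at line~8, it has found a satisfying assignment, contradicting unsatisfiability of $H$; so for unsatisfiable $H$ it must exit at line~14 with a nonempty set $E$, a center \sub{p}{init}, and an AC-mapping \Fi. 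Then I would verify that $(E,\sub{p}{init},\Fi)$ meets Definition~\ref{def:ssa}: \Fi is a legitimate AC-mapping because a pair $(\pnt{p},C)$ is added at line~13 only after line~7 confirmed \pnt{p} does not satisfy $H$ and line~9 picked a clause $C$ of $H$ falsified by \pnt{p}; and for each $\pnt{p}\in E$ with $C=\ac{\pnt{p}}$ we have $\Nbhd{p}{p}{C}\subseteq E$, since line~10 put exactly the not-yet-examined members of $\Nbhd{p}{p}{C}$ into $Q$, and the loop does not terminate until $Q$ is empty, so every such assignment is eventually examined and moved into $E$. Thus $E$ is an SSA of $H$ with center \sub{p}{init}.

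For the ``if'' direction (has an SSA $\Rightarrow$ unsatisfiable), I would argue the contrapositive: suppose $H$ is satisfiable but has an SSA $P$ with center \sub{p}{init} and AC-mapping \Fi. Let \pnt{s} be a satisfying assignment of $H$ nearest to \sub{p}{init} in Hamming distance, and run \ti{BuildPath} of Figure~\ref{fig:bld_path} starting from \sub{p}{init} with this \Fi. The key invariant to establish by induction on $i$ is that every assignment $\ppnt{p}{i}$ generated by \ti{BuildPath} lies in $P$: the base case $\ppnt{p}{1}=\sub{p}{init}\in P$ holds by definition of ``center'', and for the inductive step, given $\ppnt{p}{i}\in P$ with $\ppnt{p}{i}\neq\pnt{s}$, the clause $C=\ac{\ppnt{p}{i}}$ is falsified by $\ppnt{p}{i}$ but satisfied by \pnt{s}, so there is a variable in \V{C} assigned oppositely in the two, and flipping it produces $\ppnt{p}{i+1}$. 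Here the crucial point — and the step I expect to require the most care — is to show that \ti{FindVar} in line~7 of \ti{BuildPath} can always choose a variable whose flip moves $\ppnt{p}{i}$ \emph{strictly farther} from \sub{p}{init}, equivalently that $\ppnt{p}{i+1}\in\Nbh{\sub{p}{init}}{\ppnt{p}{i}}{C}$, so that the SSA property $\NNbhd{\sub{p}{init}}{p}{i}{C}\subseteq P$ forces $\ppnt{p}{i+1}\in P$. This follows because \pnt{s} is the satisfying assignment closest to \sub{p}{init}: if every variable of \V{C} on which $\ppnt{p}{i}$ and \pnt{s} disagree were ``closer'' to \sub{p}{init} in $\ppnt{p}{i}$ than the bit value in \pnt{s}, one could flip it and obtain an assignment still at least as close to \pnt{s}... — the clean way to see it is that among the disagreeing variables in \V{C}, at least one must be set in $\ppnt{p}{i}$ to the value that agrees with \sub{p}{init} (otherwise $\ppnt{p}{i}$ would already be strictly closer to \pnt{s} than \sub{p}{init} is, yet $\ppnt{p}{i}$ is reached by flipping bits of \sub{p}{init} toward... ) — I would spell this monotonicity argument out carefully, as it is the heart of the proof. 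Finally, since \ti{BuildPath} strictly increases the distance to \sub{p}{init} at each step it must terminate, and it terminates only when $\ppnt{p}{i}=\pnt{s}$; but then $\pnt{s}\in P$, contradicting the fact that every assignment in $P$ falsifies $H$. Hence $H$ cannot be satisfiable, completing the direction.

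The main obstacle, as indicated, is the second direction's invariant that a strictly distance-increasing flip within \V{C} always exists; everything else is bookkeeping about the two procedures. I would also remark that this is the centered analogue of the result proved in~\cite{ssp} for uncentered SSAs, so I can cite that proof for intuition while giving the centered argument in full.
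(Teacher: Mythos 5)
Your ``only if'' half is exactly the paper's own argument (Section~\ref{app:proofs} simply invokes \ti{BuildSSA}); your added termination and correctness bookkeeping for \ti{BuildSSA} is sound, so that direction is fine.

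The ``if'' half, however, has a genuine gap at precisely the step you flag as ``the heart of the proof,'' and the justification you sketch for it does not close. You need: at every step there is a variable $v \in \V{C}$, where $C = \ac{\ppnt{p}{i}}$, on which \ppnt{p}{i} disagrees with \pnt{s} but \emph{agrees} with \sub{p}{init}, so that the flip lands in \NNbhd{p}{p}{i}{C} and hence, by the SSA property, in $P$. Your stated reason --- ``otherwise \ppnt{p}{i} would already be strictly closer to \pnt{s} than \sub{p}{init} is'' --- contradicts nothing: your induction hypothesis is only that $\ppnt{p}{i} \in P$, which says nothing about where \ppnt{p}{i} sits relative to \sub{p}{init} and \pnt{s}, and minimality of \pnt{s} cannot rescue the step on its own. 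The fix is to strengthen the invariant: maintain that every variable on which \ppnt{p}{i} differs from \sub{p}{init} is one on which it agrees with \pnt{s} (equivalently, \ppnt{p}{i} lies on a shortest path between \sub{p}{init} and \pnt{s}). This holds for $\ppnt{p}{1} = \sub{p}{init}$; it immediately yields the needed $v$ (any variable of \V{C} on which \ppnt{p}{i} and \pnt{s} disagree must then agree with \sub{p}{init}, and flipping it satisfies $C$ while increasing the distance to the center); and it is preserved, since the flipped variable now agrees with \pnt{s}. With this invariant the walk stays in $P$, decreases the distance to \pnt{s} by one per step, and so reaches $\pnt{s} \in P$, contradicting the fact that every member of $P$ falsifies $H$ --- and note that \pnt{s} need not be the \emph{closest} satisfying assignment at all, so that hypothesis in your plan is a red herring. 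For comparison, the paper avoids the walk altogether: it takes the assignment $\pnt{p} \in P$ closest to \pnt{s} (not to the center), flips one disagreeing variable of \ac{\pnt{p}}, and contradicts either that minimality or the SSA property in a single step; your center-to-\pnt{s} route is workable, but only once the betweenness invariant above is made explicit.
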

\begin{proof} \tb{If part.}  Assume the contrary. Let $P$ be an SSA of $H$ with
center \sub{p}{init} and $H$ is satisfiable. Let \pnt{s} be an
assignment satisfying $H$. Let \pnt{p} be an assignment of $P$ that is
the closest to \pnt{s}~\,in terms of the Hamming distance.  Let $C
= \Fi(\pnt{p})$. Since \pnt{s} satisfies clause $C$, there is a
variable $v \in \V{C}$ that is assigned differently in \pnt{p}
and \pnt{s}.  Let \pnt{p^*} be the assignment obtained from \pnt{p} by
flipping the value of $v$.  Note that
$\pnt{p^*} \in \mi{Nbhd}(\sub{p}{init},\pnt{p},C)$.

Assume that $\pnt{p^*} \in P$. In this case, \pnt{p^*} is closer
to \pnt{s} than \pnt{p} and we have a contradiction. Now, assume that
$\pnt{p^*} \not\in P$. In this case,
$\mi{Nbhd}(\sub{p}{init},\pnt{p},C) \not\subseteq P$ and so set $P$ is
not an SSA. We again have a contradiction.

\vspace{4pt}
\noindent\tb{Only if part}. Assume that formula $H$ is unsatisfiable.  
By applying \ti{BuildSSA} shown in Figure~\ref{fig:bld_ssa} to $H$,
one generates a set $P$ that is an SSA of $H$ with respect to some
center \sub{p}{init} and AC-mapping \Fi.
\end{proof}

\section{CTSs And Circuit Redundancy}
\label{app:red}
\setlength{\intextsep}{4pt}
\begin{wrapfigure}{L}{1.1in}
 \begin{center}
 \includegraphics[width=1in]{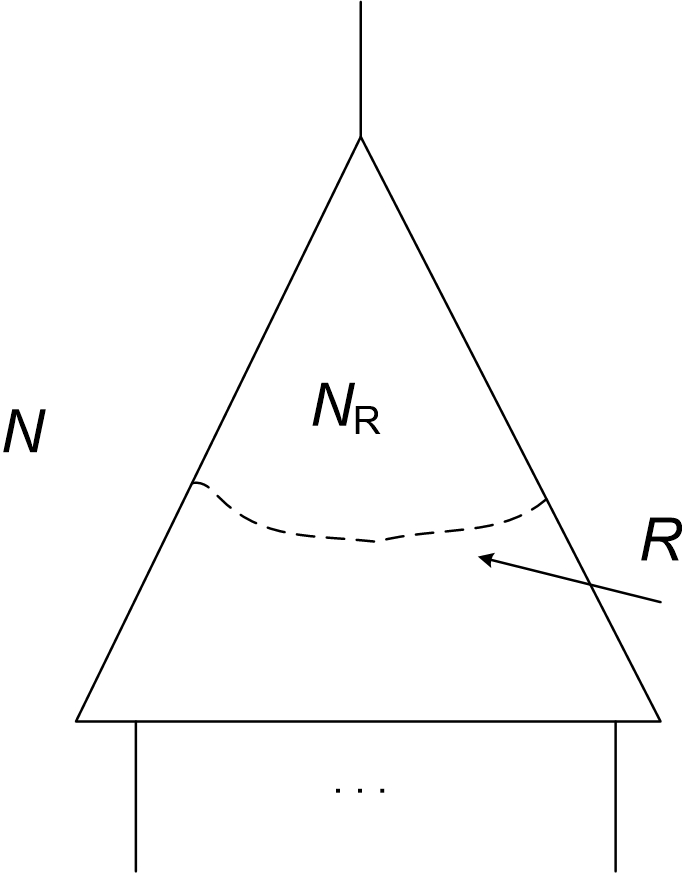}
  \end{center}
\vspace{-10pt}
\caption{A cut $R$ in circuit $N$}
\vspace{5pt}
\label{fig:cut}
\end{wrapfigure}

Let $N \equiv 0$ hold. Let $R$ be a cut of circuit $N$. We will denote
the circuit between the cut and the output of $N$ as $N_R$ (see
Figure~\ref{fig:cut}). We will say that $N$ is \tb{non-redundant} if
$N_R \not\equiv 0$ for any cut $R$ other than the cut specified by
primary inputs of $N$.

Definition~\ref{def:cts} of a CTS may not work well if $N$ is highly
redundant.  Assume, for instance, that $N_R \equiv 0$ holds for cut
$R$. This means that the clauses specifying gates of $N$ below cut $R$
(i.e. ones that are not in $N_R$) are redundant in $F_N \wedge
z$. Then one can build an SSA $P$ for $F_N \wedge z$ as follows.  Let
$P_R$ be an SSA for $F_{N_R} \wedge z$.  Let \pnt{v} be an arbitrary
assignment to the variables of $\V{N} \setminus \V{N_R}$.  Then by
adding \pnt{v} to every assignment of $P_R$ one obtains an SSA for
$F_N \wedge z$. This means that for any test \pnt{x}, \cube{x}
contains an SSA of $F_N \wedge z$. Therefore, according to
Definition~\ref{def:cts}, circuit $N$ has a CTS consisting of just one
test.

The problem above can be solved using the following observation. Let
$T$ be a set of tests \s{\ppnt{x}{1},\dots,\ppnt{x}{k}} for $N$ where
$k \leq 2^{|X|}$. Denote by $\vec{r}_i$ the assignment to the
variables of cut $R$ produced by $N$ under input \ppnt{x}{i}. Let
$T_R$ denote \s{\ppnt{r}{1},\dots, \ppnt{r}{k}}. Denote by $T^*_R$ the
set of assignments to variables of $R$ that cannot be produced in $N$
by any input assignment.  Now assume that $T$ is constructed so that
$T_R \cup T^*_R$ is a CTS for circuit $N_R$. This does not change
anything if $N_R$ is itself redundant (i.e. if $N_{R'} \equiv 0$ for
some cut $R'$ that is closer to the output of $N$ than $R$). In this
case, it is still sufficient to use $T$ of one test because $N_R$ has
a CTS of one assignment (in terms of cut $R$).  Assume however, that
$N_R$ is non-redundant. In this case, there is no ``degenerate'' CTS
for $N_R$ and $T$ has to contain at least $|T_R|$ tests.  Assuming
that $T^*_R$ alone is far from being a CTS for $N_R$, a CTS $T$ for
$N$ will consist of many tests.

So a solution to the problem caused by redundancy of $N$ is as
follows.  One should require that for every cut $R$ where $N_R \equiv
0$ holds, set $T_R \cup T^*_R$ should be a CTS for $N_R$. The fact
that there always exists at least one cut $R$ where $N_R$ is
non-redundant eliminates degenerate single-test CTSs for $N$.

\section{Reusing SSAs}
\label{app:reuse}
Let \sas be applied to formula $G(V,W)$ to produce formula $H(V)$ and
its SSA. Let us explain the idea of SSA reusing by the following
example.  Let $P_0$ be the SSA generated by \sas in branch $w=0$ where
$w \in W$. Let us show how SSA $P_1$ for branch $w=1$ can be derived
from $P_0$.  Let $\Fi_0$ be the AC-mapping for $P_0$. Assume for the
sake of simplicity that
\begin{itemize}
\item  only one clause $B$ of $\Fi_0(P_0)$ contains literal $w$
\item  only assignment $\pnt{q} \in P_0$ is mapped by $\Fi_0$ to
  clause $B$.
\end{itemize}
Thus, the only reason why $P_0$ is not an SSA in branch $w=1$ is that
\pnt{q} is not mapped to any clause. (Recall that SSAs built by \sas
consist of assignments to $V$. So the construction of an SSA in branch
$w=1$ is different from $w=0$ only because some $V$-clauses of branch
$w=0$ are satisfied in branch $w=1$ and vice versa.) Let
\ti{BuildSSA}* denote the modification of procedure \ti{BuildSSA} (see
Figure~\ref{fig:bld_ssa}) aimed at re-using $P_0$ when building SSA
$P_1$.

Recall that \ti{BuildSSA} maintains sets $E$ and $Q$.  The former
consists of the assignments whose neighborhood has been already
explored and the latter stores the assignments whose neighborhood is
yet to be explored.  \ti{BuildSSA}* splits $Q$ into two sets: $Q'$ and
$Q''$. An assignment \pnt{p} is put in  $Q'$ if
\begin{itemize}
\item \pnt{p} is in $P_0$ and
 \item clause $\Fi_0(\pnt{p})$ is not satisfied by $w=1$
\end{itemize}
(In our case, every assignment of $P_0$ but the assignment \pnt{q}
above is put in set $Q'$.) On the other hand, every assignment whose
neighborhood is yet to be considered and that does not satisfy the two
conditions above is put in set $Q''$. The reason for this split is
that the assignments from $Q'$ are cheaper to process. Namely, if
$\pnt{p} \in Q'$, then instead of looking for a clause falsified by
\pnt{p}, \ti{BuildSSA}* uses clause $\Fi_0(\pnt{p})$. For that reason,
assignments of $Q'$ are the first to be considered by
\ti{BuildSSA}*. An assignment of $Q''$ is processed only if $Q'$ is
currently empty.

\ti{BuildSSA}* starts with the same center \sub{p}{init} that was used
when building $P_0$. If \sub{p}{init} is different from \pnt{q}, it is
put in $Q'$. Otherwise, it is put in $Q''$.  Let \pnt{p} be the
assignment picked by \ti{BuildSSA}* from $Q'$ or $Q''$. Let $C$ be the
clause to which \pnt{p} is mapped by $\Fi_1$. Let \pnt{p^*} be an
assignment of \Nbhd{p}{p}{C}. If \pnt{p^*} satisfies the two
conditions above, \ti{BuildSSA}* puts it in $Q'$. Otherwise, \pnt{p^*}
is added to $Q''$.

\end{document}